\documentclass{amsart}
\RequirePackage{url}
\usepackage{verbatim}
\usepackage {   amsfonts    }
\usepackage {   amsmath  }
\usepackage {   amssymb     }
\usepackage {   amsthm      }
\usepackage {   latexsym    }
\usepackage {   graphicx    }
\usepackage {   color       }
\usepackage {   multicol    }

\theoremstyle{plain}
\newtheorem{thm}{Theorem}[section]

\newtheorem{lemma}[thm]{Lemma}
\newtheorem{defn}[thm]{Definition}
\newtheorem{cor}[thm]{Corollary}
\newtheorem{prop}[thm]{Proposition}

\theoremstyle{remark}

\newcommand{\dprod}{\displaystyle\prod}
\newcommand{\norm}[1]{\left|\left|#1\right|\right|}
\newcommand{\innerp}[2]{\left(#1,#2\right)}
\renewcommand{\vec}[1]{\underline{#1}}

\setlength{\parindent}{0pt} \setlength{\parskip}{10pt plus 6pt minus
4pt}

\begin{document}
\title[Inverse Inequality Estimates with Symbolic Computation]%
{Inverse Inequality Estimates\\ with\\ Symbolic Computation}

\author[Christoph Koutschan]{Christoph Koutschan}
\address{Christoph Koutschan,
         Johann Radon Institute for Computational and Applied Mathematics (RICAM),
         Austrian Academy of Sciences,
         Altenberger Stra\ss e 69,
         4040 Linz, Austria}
\email{christoph.koutschan@ricam.oeaw.ac.at}
\author[Martin Neum\"uller]{Martin Neum\"uller}
\address{Martin Neum\"uller,
         Institute of Computational Mathematics,
         Johannes Kepler University,
         Altenberger Stra\ss e 69,
         4040 Linz, Austria}
\email{neumueller@numa.uni-linz.ac.at}
\author{Cristian-Silviu Radu}
\address{Cristian-Silviu Radu,
         Research Institute for Symbolic Computation (RISC),
         Johannes Kepler University,
         Altenberger Stra\ss e 69,
         4040 Linz, Austria}
\email{sradu@risc.uni-linz.ac.at}
\thanks{\copyright\ 2016. This manuscript version is made available under the
  CC-BY-NC-ND 4.0 license http:/$\!$/creativecommons.org/licenses/by-nc-nd/4.0/.
  The formal publication of this article is in Advances of Applied Mathematics
  (DOI: 10.1016/j.aam.2016.04.005).}
\thanks{C.~K. was supported by the Austrian Science Fund (FWF): DK W1214.
The research of C.-S.~R. was supported by the strategic program ``Innovatives
O\"O 2010 plus'' by the Upper Austrian Government in the frame of project
W1214-N15-DK6 of the Austrian Science Fund (FWF)}

\begin{abstract}
In the convergence analysis of numerical methods for solving partial
differential equations (such as finite element methods) one arrives at
certain generalized eigenvalue problems, whose maximal eigenvalues need to be
estimated as accurately as possible. We apply symbolic computation methods to
the situation of square elements and are able to improve the previously known
upper bound, given in ``$p$- and $hp$-finite element methods'' (Schwab, 1998),
by a factor of~$8$. More precisely, we try to evaluate the
corresponding determinant using the holonomic ansatz, which is a powerful tool
for dealing with determinants, proposed by Zeilberger in 2007.  However, it
turns out that this method does not succeed on the problem at hand. As a
solution we present a variation of the original holonomic ansatz that is
applicable to a larger class of determinants, including the one we are dealing
with here. We obtain an explicit closed form for the determinant, whose
special form enables us to derive new and tight upper resp. lower bounds on
the maximal eigenvalue, as well as its asymptotic behaviour.
\end{abstract}

\keywords{Zeilberger's algorithm, inverse inequality, holonomic ansatz,
  finite element method, holonomic function, symbolic determinant evaluation}

\subjclass{%
Primary 
33F10, % Symbolic computation (Gosper and Zeilberger algorithms, etc.)
65N12; % Stability and convergence of numerical methods
Secondary
65N30, % Finite elements, Rayleigh-Ritz and Galerkin methods, finite methods
68W30, % Symbolic computation and algebraic computation
65F15, % Eigenvalues, eigenvectors
05A20, % Combinatorial inequalities
15A15, % Determinants, permanents, other special matrix functions
15A45} % Miscellaneous inequalities involving matrices

\maketitle

\section{Introduction}

Interdisciplinary collaborations between different areas of
mathematics can be hard work because of different terminology and the
difficulty of recognizing the applicability of the methods from one
field in the other field. In Linz there is an almost-20-year tradition
of bringing together researchers from numerical mathematics and
symbolic computation~\cite{Pillwein15,PillweinTakacs14,KoutschanLehrenfeldSchoeberl12,BecirovicEtAl06},
which at the beginning faced exactly these kinds
of problems. Additionally, there could be the risk that the results
are only interesting for one community and not rewarded by the other
one.  Fortunately, this didn't happen in our case: in the current
work, we use and invent tools at the frontier of symbolic computation
to solve a problem that arose at the frontier of numerical analysis
research. Hence, this work improves the knowledge and tools for both
communities.

Inverse inequalities of the form
\begin{align}
 \norm{v_n}_{X(\Omega)} &\leq c_1(h,n) \norm{v_n}_{Y(\Omega)} \qquad \text{for all } v_n \in V_n, \label{equ_generalInverseInequality1}\\
 \norm{v_n}_{Z(\partial\Omega)} &\leq c_2(h,n) \norm{v_n}_{Y(\Omega)} \qquad \text{for all } v_n \in V_n \label{equ_generalInverseInequality2}
\end{align}
play an important role in the analysis and design of numerical methods for
partial differential equations \cite{Ciarlet_1978, Schwab_1998, Brenner_2008,
  Pietro_2012} and in the construction of efficient solvers for the arising
linear systems of those methods \cite{Hackbusch_1985,
  Trottenberg_2001}. Bounds for the constants of the type
\eqref{equ_generalInverseInequality1}--\eqref{equ_generalInverseInequality2}
have been studied for example in \cite{Schwab_1998, Warburton_2003,
  Verfuerth_2004, Graham_2005, Georgoulis_2008}, where the asymptotic
behaviour with respect to $h$ and $n$ is covered but usually these constants
are over estimated. In many numerical methods a precise knowledge of these
constants is required, which motivates this work where we use and present
tools from symbolic computation to derive precise estimates.

Here $\Omega \subset \mathbb{R}^d, d \in \mathbb{N}$, is a bounded and open
set with sufficiently smooth boundary $\partial\Omega$, describing a finite
element with diameter $h > 0$ which is used in the numerical method (intervals
for $d=1$, often triangles or quadrilaterals for $d=2$, usually tetrahedra or
hexahedra for $d=3$, \ldots). Let $V$ be some infinite-dimensional space of
functions defined on $\Omega$ such that the solution of the PDE is an element
of~$V$.  With $(V_n)_{n \in \mathbb{N}}$ we denote a family of
finite-dimensional (usually closed) subspaces of $V$ whose dimension depends
on~$n$; the desired solution of the PDE is approximated by an element
of~$V_n$.  Moreover we have some given norms $\norm{\cdot}_{X(\Omega)}$,
$\norm{\cdot}_{Y(\Omega)}$ and $\norm{\cdot}_{Z(\partial\Omega)}$ which are
induced by certain inner products $\innerp{\cdot}{\cdot}_{X(\Omega)}$,
$\innerp{\cdot}{\cdot}_{Y(\Omega)}$ and
$\innerp{\cdot}{\cdot}_{Z(\partial\Omega)}$, which are used in the analysis of
the numerical methods. In general the constants $c_1$ and $c_2$
of~\eqref{equ_generalInverseInequality1}
and~\eqref{equ_generalInverseInequality2} depend on the diameter~$h$ and on
the parameter~$n$ reflecting the dimension of the space~$V_n$. The dependence
with respect to the diameter~$h$ is obtained by transforming Equations
\eqref{equ_generalInverseInequality1} and
\eqref{equ_generalInverseInequality2} to a reference domain $\hat{\Omega}
\subset \mathbb{R}^d$, i.e.
\begin{align}
 \norm{\hat{v}_n}_{X(\hat{\Omega})} &\leq \hat{c}_1(n) \norm{\hat{v}_n}_{Y(\hat{\Omega})} \qquad \text{for all } \hat{v}_n \in \hat{V}_n, \label{equ_referenceInverseInequality1}\\
 \norm{\hat{v}_n}_{Z(\partial\hat{\Omega})} &\leq \hat{c}_2(n) \norm{\hat{v}_n}_{Y(\hat{\Omega})} \qquad \text{for all } \hat{v}_n \in \hat{V}_n \label{equ_referenceInverseInequality2}
\end{align}
and applying a scaling argument \cite{Schwab_1998, Warburton_2003,
  Brenner_2008}. The more challenging problem is to find precise estimates for
the constants $\hat{c}_1$ and $\hat{c}_2$ with respect to the
parameter~$n$. The best possible constants by definition are given by
\begin{align}
 \hat{c}_1(n) &= \sup_{\hat{v}_n \in \hat{V}_n} \frac{\norm{\hat{v}_n}_{X(\hat{\Omega})}}{\norm{\hat{v}_n}_{Y(\hat{\Omega})}} = \sqrt{ \sup_{\hat{v}_n \in \hat{V}_n} \frac{\innerp{\hat{v}_n}{\hat{v}_n}_{X(\hat{\Omega})}}{\innerp{\hat{v}_n}{\hat{v}_n}_{Y(\hat{\Omega})}} }, \label{equ_constant1} \\
  \hat{c}_2(n) &= \sup_{\hat{v}_n \in \hat{V}_n} \frac{\norm{\hat{v}_n}_{Z(\partial\hat{\Omega})}}{\norm{\hat{v}_n}_{Y(\hat{\Omega})}} = \sqrt{ \sup_{\hat{v}_n \in \hat{V}_n} \frac{\innerp{\hat{v}_n}{\hat{v}_n}_{Z(\hat{\partial\Omega})}}{\innerp{\hat{v}_n}{\hat{v}_n}_{Y(\hat{\Omega})}} }. \label{equ_constant2}
\end{align}
Introducing for $\hat{V}_n$ the basis functions $(\varphi_k)_{1\leq k\leq n}$,
i.e., $\hat{V}_n = \mathrm{span}\{ \varphi_1,\dots,\varphi_n \}$, we further
obtain
\[
 \bigl(\hat{c}_1(n)\bigr)^2 = \sup_{\vec{v}_n \in \mathbb{R}^n} \frac{\innerp{K_n \vec{v}_n}{\vec{v}_n}_{\ell^2}}{\innerp{M_n \vec{v}_n}{\vec{v}_n}_{\ell^2}} \quad \text{and} \quad \bigl(\hat{c}_2(n)\bigr)^2 = \sup_{\vec{v}_n \in \mathbb{R}^n} \frac{\innerp{L_n \vec{v}_n}{\vec{v}_n}_{\ell^2}}{\innerp{M_n \vec{v}_n}{\vec{v}_n}_{\ell^2}},
\]
with the symmetric and positive (semi-) definite matrices
\[
  K_n(i,j) := \innerp{\varphi_j}{\varphi_i}_{X(\hat{\Omega})},\ 
  M_n(i,j) := \innerp{\varphi_j}{\varphi_i}_{Y(\hat{\Omega})},\ \text{and}\ 
  L_n(i,j) := \innerp{\varphi_j}{\varphi_i}_{Z(\hat{\partial\Omega})}
\]
for $i,j = 1,\ldots,n$. Note that the rows of the matrices are naturally defined via the second argument of the inner product and the columns are given naturally via the first argument of the inner product. Hence, the constants $(\hat{c}_1)^2$ and
$(\hat{c}_2)^2$ are given by the largest eigenvalues of the generalized
eigenvalue problems
\begin{align}
\label{eq.ev}
  K_n \vec{x}_n = \lambda_n M_n \vec{x}_n \qquad \text{and} \qquad L_n \vec{x}_n = \mu_n M_n \vec{x}_n,
\end{align}
i.e.
\begin{align*}
 \bigl(\hat{c}_1(n)\bigr)^2 = \lambda_n \qquad \text{and} \qquad \bigl(\hat{c}_2(n)\bigr)^2 = \mu_n.
\end{align*}

In this work we want to solve the
problems~\eqref{equ_referenceInverseInequality1}
and~\eqref{equ_referenceInverseInequality2}, i.e., determine estimates for
$\hat{c}_1(n)$ and $\hat{c}_2(n)$, for the reference domain $\hat{\Omega} =
(-1,1)^2$ with
\begin{align*}
  \innerp{u}{v}_{X(\hat{\Omega})} &= \int_{\hat{\Omega}} \partial_x u(x,y) \partial_x v(x,y) \,\mathrm{d}x\,\mathrm{d}y,\\
  \innerp{u}{v}_{Y(\hat{\Omega})} &= \int_{\hat{\Omega}} u(x,y) v(x,y) \,\mathrm{d}x\,\mathrm{d}y,
\end{align*}
for $u,v \in \hat{V}_n$, where $\hat{V}_n$ is the space of polynomials of degree less than~$n$, i.e.
\begin{align*}
 \hat{V}_n = \left\{ x^i y^j : 0 \leq i,j < n \right\}.
\end{align*}
In Section~\ref{section_evEstimates} we
state the problem in detail and derive its formulation as a generalized
eigenvalue problem of the form~\eqref{eq.ev}.  The difficulty is now to find
an accurate estimate for the largest eigenvalues $\lambda_n$ and $\mu_n$ for
general parameter $n \in \mathbb{N}$. Of course one can compute the
eigenvalues exactly for a given fixed parameter~$n$, which is done for example
in \cite{Ozisik_2010}. But to derive their exact values or precise estimates
for a general parameter~$n$ one needs techniques from symbolic computation. In
Section~\ref{section_detEvaluation} we use the HolonomicFunctions
package~\cite{Koutschan09,Koutschan10b} to prove a closed-form representation
of the characteristic polynomial of our eigenvalue problem, in the spirit of
the holonomic ansatz~\cite{Zeilberger07} for evaluating determinants. The
holonomic ansatz is a very powerful method (it was the key to the ``holy grail
of enumerative combinatorics''~\cite{KoutschanKauersZeilberger11}), and a very
flexible one, too~\cite{KoutschanThanatipanonda13}. For our purposes here we
had to adapt this algorithm, which led to a new variant that is applicable to
a larger class of determinants. In Section~\ref{section_bounds}, our
representation of the characteristic polynomial is used to derive and prove
the estimates from below and above
\begin{align*}
  \frac{1}{4} \sqrt{n (n-1) (n+1)(n+2)} \leq \hat{c}_1(n) \leq \frac{1}{2 \sqrt{2}}\sqrt{n (n-1) (n+1)(n+2)}
\end{align*}
for the constant~$\hat{c}_1(n)$ in the inequality
\begin{align*}
 \norm{\partial_{\hat{x}}\hat{u}_n}_{L^2(\hat{\Omega})} \leq
 \hat{c}_1(n) \norm{\hat{u}_n}_{L^2(\hat{\Omega})} \qquad \text{for all } \hat{u}_n \in \hat{V}_n.
\end{align*}
In Lemmas \ref{lowerbound} and \ref{upperbound} we give much sharper estimates
for~$\hat{c}_1(n)$; as an application, they allow to tune the parameters of
numerical methods precisely.  In Section~\ref{section_asympt} the same
representation is used to investigate the asymptotic behaviour of the
eigenvalues; on the way we discover some interesting connections to the Taylor
expansions of trigonometric functions.  As an encore, we deal with the second
inequality~\eqref{equ_referenceInverseInequality2} in
Section~\ref{section_easydet}.  It turns out that it is considerably simpler
and we are able to derive the exact value of $\hat{c}_2(n)$.

Throughout the paper, we employ the following notation:
$(a)_n$ denotes the Poch\-hammer symbol, also
known as rising factorial, defined for all nonnegative integers~$n$ by
\[
  (a)_n:=a\cdot(a+1)\cdots(a+n-1)\text{ for }n>0\quad\text{and}\quad(a)_0:=1.
\]
We use $\lfloor x\rfloor$ for the floor function, and $\lceil x\rceil$ for the
ceiling function, i.e., the largest integer below~$x$, resp.  the smallest
integer above~$x$. For a polynomial~$p$ we refer to the degree of~$p$ with
respect to the variable~$x$ by $\deg_x(p)$. By $\delta_{i,j}$ we denote the
Kronecker delta symbol, i.e., $\delta_{i,j}=0$ if $i\neq j$ and
$\delta_{i,i}=1$.  If $A$ is the $n\times n$ matrix $(a_{i,j})_{1\leq i,j\leq
  n}$, then we use for its determinant the short-hand notation
$\det(A)=\det_{1\leq i,j\leq n}(a_{i,j})$. The determinant of the $0\times0$
matrix is defined to be~$1$.

\section{The Maximal Eigenvalue Problem}
\label{section_evEstimates}

Let the reference domain $\hat{\Omega}\subset \mathbb{R}^2$ be defined by
$\hat{\Omega}:=(-1,1)^2$, the open square of size~$2$ centered around the
origin.  For $n\in\mathbb{N}$ and $k\in\{1,\dots,n^2\}$, we define $\chi_n(k)$
and $\rho_n(k)$ to be the unique integers in $\{0,\dots,n-1\}$ satisfying
$k=\chi_n(k)\cdot n+\rho_n(k)+1$.  In other words,
\[
  \chi_n(k) := \Bigl\lfloor\frac{k-1}{n}\Bigr\rfloor
  \qquad\text{and}\qquad
  \rho_n(k) := k-1\mathrel{\mathrm{mod}}n.
\]
For the rest of this section we fix $n\in\mathbb{N}$ and write shortly
$\chi(k)$ and $\rho(k)$.  By employing the standard monomial basis
$\varphi_k:=x^{\rho(k)}t^{\chi(k)}$, we obtain the $n^2\times n^2$ matrix
$M_n$ with entries $m_{i,j}$ defined by
\begin{equation}\label{eq.M}
  m_{i,j}:=\int_{\hat{\Omega}} \varphi_i \varphi_j \,\mathrm{d}x\,\mathrm{d}t \qquad (1\leq i,j\leq n^2)
\end{equation}
and the $n^2\times n^2$ matrix $K_n$ with entries
\begin{equation}\label{eq.K}
  k_{i,j}:=\int_{\hat{\Omega}}(\partial_x \varphi_i)(\partial_x\varphi_j) \,\mathrm{d}x\,\mathrm{d}t \qquad (1\leq i,j\leq n^2).
\end{equation}
Since $\varphi_i$ and $\varphi_j$ are just monomials, these integrals can be
evaluated in a straightforward manner:
\begin{align*}
m_{i,j} &= \int_{-1}^{1}\left(\int_{-1}^1 x^{\rho(i)}t^{\chi(i)} x^{\rho(j)}t^{\chi(j)} \,\mathrm{d}x\right)\mathrm{d}t\\
&= \int_{-1}^1\frac{1-(-1)^{\rho(i)+\rho(j)+1}}{\rho(i)+\rho(j)+1} \, t^{\chi(i)+\chi(j)} \,\mathrm{d}t\\
&= \frac{1-(-1)^{\rho(i)+\rho(j)+1}}{\rho(i)+\rho(j)+1}\cdot\frac{1-(-1)^{\chi(i)+\chi(j)+1}}{\chi(i)+\chi(j)+1}.
\end{align*}
Similarly
\begin{align*}
k_{i,j} &= \int_{-1}^1\left(\int_{-1}^1 \rho(i)\rho(j)x^{\rho(i)-1}t^{\chi(i)} x^{\rho(j)-1}t^{\chi(j)} \,\mathrm{d}x\right)\mathrm{d}t\\
&= \int_{-1}^1\rho(i)\rho(j)\frac{1-(-1)^{\rho(i)+\rho(j)-1}}{\rho(i)+\rho(j)-1}\,t^{\chi(i)+\chi(j)} \,\mathrm{d}t\\
&= \rho(i)\rho(j)\,\frac{1-(-1)^{\rho(i)+\rho(j)-1}}{\rho(i)+\rho(j)-1}\cdot\frac{1-(-1)^{\chi(i)+\chi(j)+1}}{\chi(i)+\chi(j)+1},
\end{align*}
where we assumed that $\rho(i)+\rho(j)>1$; otherwise the integral equals to~$0$.

We are interested in computing the maximal $\lambda_{n}\in\mathbb{R}$ such
that $\det(K_{n}-\lambda_{n}M_{n})=0$. In the following we derive an
equivalent formulation of this problem that involves smaller matrices. For
this purpose let 
\[
  a_{i,j}:=\frac{1-(-1)^{i+j-1}}{i+j-1}
  \quad\text{and}\quad
  b_{i,j}:=(i-1)(j-1)\frac{1-(-1)^{i+j-3}}{i+j-3},
\]
such that the matrix entries $m_{i,j}$ and $k_{i,j}$ can be written as
\begin{align*}
  m_{i,j} &= a_{\chi(i)+1,\chi(j)+1} \cdot a_{\rho(i)+1,\rho(j)+1} \\
  k_{i,j} &= a_{\chi(i)+1,\chi(j)+1} \cdot b_{\rho(i)+1,\rho(j)+1}.
\end{align*}
This shows that the matrices $M_n$ and $K_n$ can be written as Kronecker products:
\[
  M_n = A_n\otimes A_n
  \quad\text{and}\quad
  K_n = A_n\otimes B_n.
\]
These representations as Kronecker products are quite natural since the used basis functions and the reference domain $\hat{\Omega}$ itself have tensor product structure.
In particular we then obtain,
\[
  \det(K_n-\lambda_nM_n) = \det\bigl(A_n \otimes (B_n-\lambda_nA_n)\bigr)
  = \det(A_n)^n \det(B_n-\lambda_nA_n)^n.
\]
So the problem is equivalent to computing the maximal $\lambda_{n}\in\mathbb{R}$ such that
\[
  \det(B_{n}-\lambda_{n}A_{n})=0.
\]

\section{Determinant Evaluation}
\label{section_detEvaluation}

According to the previous discussion, we are now interested in evaluating the
determinant
\[
  \det(B_{n}-\lambda A_{n}) =
  \det_{1\leq i,j\leq n}\biggl( \bigl(1-(-1)^{i+j-1}\bigr)\Bigl(\frac{(i-1)(j-1)}{i+j-3}-\frac{\lambda}{i+j-1}\Bigr)\biggr)
\]
for symbolic~$\lambda$; the desired maximal eigenvalue~$\lambda_n$ is then
just the largest root of the obtained polynomial.  We see that the matrix
$B_{n}-\lambda A_{n}$ has zeros at all positions $(i,j)$ for which $i+j$ is an
odd integer. By applying the permutation $(2,4,6,\dots,1,3,5,\dots)$ to the
rows and to the columns of the matrix, we decompose it into block form and
obtain
\[
  \det(B_n-\lambda A_n) =
  2^n \begin{vmatrix}A^{(0)}_{\lfloor n/2\rfloor} & 0\\ 0& A^{(1)}_{\lceil n/2\rceil}\end{vmatrix} =
  2^n \det\Bigl(A^{(0)}_{\lfloor n/2\rfloor}\Bigr) \cdot \det\Bigl(A^{(1)}_{\lceil n/2\rceil}\Bigr)
\]
where the subscripts indicate the dimensions of the square matrices $A^{(0)}$ and $A^{(1)}$,
whose entries are independent of the dimension and given by
\begin{align}
  \label{eq.a0} a^{(0)}_{i,j} &:= \frac{(2i-1)(2j-1)}{2i+2j-3} - \frac{\lambda}{2i+2j-1}, \\
  \label{eq.a1} a^{(1)}_{i,j} &:= \frac{4(i-1)(j-1)}{2i+2j-5} - \frac{\lambda}{2i+2j-3}.
\end{align}
Hence the matrices $A^{(0)}$ and $A^{(1)}$ start as follows:
\begin{align*}
  A^{(0)} &= \begin{pmatrix}
  1-\frac{\lambda}{3} & 1-\frac{\lambda}{5} & 1-\frac{\lambda}{7} & 1-\frac{\lambda}{9} & \cdots \\[1ex]
  1-\frac{\lambda}{5} & \frac{9}{5}-\frac{\lambda}{7} & \frac{15}{7}-\frac{\lambda}{9} & \frac{7}{3}-\frac{\lambda}{11} & \cdots \\[1ex]
  1-\frac{\lambda}{7} & \frac{15}{7}-\frac{\lambda}{9} & \frac{25}{9}-\frac{\lambda}{11} & \frac{35}{11}-\frac{\lambda}{13} & \cdots \\[1ex]
  1-\frac{\lambda}{9} & \frac{7}{3}-\frac{\lambda}{11} & \frac{35}{11}-\frac{\lambda}{13} & \frac{49}{13}-\frac{\lambda}{15} & \cdots \\
  \vdots & \vdots & \vdots & \vdots & \ddots
  \end{pmatrix},\\
  A^{(1)} &= \begin{pmatrix}
  -\lambda & -\frac{\lambda}{3} & -\frac{\lambda}{5} & -\frac{\lambda}{7} & \cdots \\[1ex]
  -\frac{\lambda}{3} & \frac{4}{3}-\frac{\lambda}{5} & \frac{8}{5}-\frac{\lambda}{7} & \frac{12}{7}-\frac{\lambda}{9} & \cdots \\[1ex]
  -\frac{\lambda}{5} & \frac{8}{5}-\frac{\lambda}{7} & \frac{16}{7}-\frac{\lambda}{9} & \frac{8}{3}-\frac{\lambda}{11} & \cdots \\[1ex]
  -\frac{\lambda}{7} & \frac{12}{7}-\frac{\lambda}{9} & \frac{8}{3}-\frac{\lambda}{11} & \frac{36}{11}-\frac{\lambda}{13} & \cdots \\
  \vdots & \vdots & \vdots & \vdots & \ddots
  \end{pmatrix}.
\end{align*}

\begin{thm}
\label{thm.detA0A1}
Let $a^{(0)}_{i,j}$ and $a^{(1)}_{i,j}$ be defined as in \eqref{eq.a0}
and~\eqref{eq.a1}, then the following identities hold for all nonnegative
integers~$n$:
\begin{align*}
  \det A^{(0)}_n &= \det_{1\leq i,j\leq n} a^{(0)}_{i,j} = (-1)^n \, h^{(0)}_n \cdot F_{2n}(\lambda),\\
  \det A^{(1)}_n &= \det_{1\leq i,j\leq n} a^{(1)}_{i,j} = (-1)^n \, h^{(1)}_n \cdot \lambda F_{2n-1}(\lambda),
\end{align*}
where
\begin{align}
%  h^{(0)}_n &= \frac{1}{2^n} \prod_{i=1}^n \frac{\bigl((i-1)!\bigr){}^2}{\left(i+\frac12\right)_n}, \\
%  h^{(1)}_n &= \frac{1}{2^n} \prod_{i=1}^n \frac{\bigl((i-1)!\bigr){}^2}{\left(i-\frac12\right)_n}.
  \label{eq.f}
  F_n(\lambda) &:= \sum_{j=0}^{\nu} (-4)^{j-\nu} \frac{(2\nu-2j+1)_{n}}{(2j-2\nu+n)!} \lambda^j
  \quad\text{with}\quad \nu = \nu(n) := \left\lfloor\frac{n}{2}\right\rfloor, \\
  \label{eq.h}
  h^{(\ell)}_n &:= \frac{1}{2^n} \prod_{i=1}^n \frac{\bigl((i-1)!\bigr){}^2}{\left(i-\ell+\frac12\right)_n}.
\end{align}
\end{thm}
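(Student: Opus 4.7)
The strategy is to treat both determinants uniformly by a parameter $\ell\in\{0,1\}$ and to apply the (variant of the) holonomic ansatz for determinant evaluation referenced in the section heading. The first step is experimental: compute $\det A^{(\ell)}_n$ symbolically for small~$n$, observe the factorization into the scalar prefactor $(-1)^n h^{(\ell)}_n$ (whose shape is read off from the behaviour of the leading and trailing coefficients in~$\lambda$) times a monic-up-to-sign polynomial in~$\lambda$, and check that the latter matches $F_{2n}(\lambda)$ for $\ell=0$ and $\lambda F_{2n-1}(\lambda)$ for $\ell=1$. The closed form \eqref{eq.f} itself is guessed by examining, across varying~$n$, the coefficients of $\lambda^j$, which turn out to be ratios of Pochhammer symbols.

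The core of the proof is to exhibit, for each $\ell\in\{0,1\}$ and each $n\geq 1$, a normalized \emph{certificate} $c^{(\ell)}_n(j)$, given in closed form in $j$, $n$, and $\lambda$, satisfying
\[
 c^{(\ell)}_n(n)=1
 \qquad\text{and}\qquad
 \sum_{j=1}^n c^{(\ell)}_n(j)\, a^{(\ell)}_{i,j}=0
 \quad\text{for } i=1,\dots,n-1.
\]
These relations pin down $(c^{(\ell)}_n(j))_{j}$ as the normalized last column of $\bigl(A^{(\ell)}_n\bigr)^{-1}$, and cofactor expansion then yields the telescoping identity
\[
 \frac{\det A^{(\ell)}_n}{\det A^{(\ell)}_{n-1}}
 \;=\; \sum_{j=1}^n c^{(\ell)}_n(j)\, a^{(\ell)}_{n,j}.
\]
Given a closed-form evaluation of the right-hand side, the proposed formulas for $\det A^{(\ell)}_n$ follow by induction on~$n$, with base case $\det A^{(\ell)}_0=1$. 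It also has to be checked that the ratio of the guessed formulas at consecutive values of $n$ matches this right-hand side, which is a separate hypergeometric identity in~$n$ and~$\lambda$.

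Each of the required verifications --- the normalization, the $n-1$ vanishing relations parametrised by~$i$, and the quotient evaluation --- is an identity between hypergeometric-type sums in~$j$ with parameters $n$, $i$, $\lambda$. These are discharged by creative telescoping as implemented in the HolonomicFunctions package~\cite{Koutschan09,Koutschan10b}: one computes annihilating operators in the shifts of the appropriate parameters, checks that both sides of each identity satisfy the same operator, and confirms agreement at finitely many initial values to conclude equality.

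The main obstacle, and the reason a variant of the ansatz is required, is producing $c^{(\ell)}_n(j)$ in the first place. For symbolic $\lambda$ the last column of $\bigl(A^{(\ell)}_n\bigr)^{-1}$ is a rational function of~$\lambda$ whose numerator degree grows with~$n$, so the certificate is not a proper hypergeometric term in~$j$ and~$n$ over $\mathbb{Q}(\lambda)$ of bounded complexity, and a direct application of Zeilberger's original ansatz fails to return any useful recurrence. The variant announced in Section~\ref{section_detEvaluation} has to relax the hypergeometric shape of the ansatz --- for instance by allowing the certificate to depend polynomially on $\lambda$ with coefficients that are themselves hypergeometric in $(j,n)$ --- so as to capture the true cofactor sequence. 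Once the certificate is available in this more flexible form, everything downstream is routine symbolic computation.
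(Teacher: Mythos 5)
Your general framework (a certificate vector proportional to the last column of the inverse, certified by creative telescoping with HolonomicFunctions) is the right one, but the proof skeleton you actually commit to is Zeilberger's \emph{original} ansatz, and on this determinant it breaks precisely at the step you declare routine. With the normalization $c^{(\ell)}_n(n)=1$ the certificate entries are $p^{(\ell)}_{n,j}/p^{(\ell)}_{n,n}$, and since $p^{(0)}_{n,n}$ is a $\lambda$-free multiple of $F_{2n-2}(\lambda)$, the quotient you would have to evaluate and match is
\[
  \frac{\det A^{(0)}_n}{\det A^{(0)}_{n-1}}
  \;=\; -\,\frac{h^{(0)}_n}{h^{(0)}_{n-1}}\,\frac{F_{2n}(\lambda)}{F_{2n-2}(\lambda)},
\]
which is \emph{not} ``a separate hypergeometric identity in $n$ and $\lambda$'': the polynomials $F_{2n}$ satisfy a genuine second-order recurrence in~$n$, so the sequence of quotients $F_{2n}/F_{2n-2}$ is (almost certainly) non-holonomic, there is no recurrence for it that creative telescoping could certify, and the induction step cannot be discharged as you describe. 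This is exactly the obstruction that forces the variant: the fix is not just to ``relax the hypergeometric shape'' of the certificate but to change the normalization so that the certificate itself is a polynomial vector in~$\lambda$ with $A^{(\ell)}_n p^{(\ell)}_n=(0,\dots,0,q_n(\lambda))^T$, $q_n$ monic of degree~$n$ (namely $F_{2n}$ resp.\ $\lambda F_{2n-1}$), prove these $n$ identities by telescoping (Lemma~\ref{lem.invA0A1}), and then conclude \emph{not} by telescoping the quotient but by a divisibility-plus-degree argument in~$\lambda$.

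The second, independent gap is that your proposal has no mechanism for establishing the $\lambda$-free prefactor $(-1)^n h^{(\ell)}_n$. This factor is non-holonomic in~$n$ (a product of factorial-type products), so it cannot come out of any recurrence returned by the software, and reading it off from small cases is guessing, not proof. The paper pins it down by observing that the leading coefficient of $\det A^{(\ell)}_n$ in~$\lambda$ is $(-1)^n\det_{1\leq i,j\leq n}\bigl(\tfrac{1}{2i+2j-1-2\ell}\bigr)$, evaluated in closed form via Cauchy's double alternant (Lemma~\ref{lem.detLC}); combined with $\deg_\lambda\det A^{(\ell)}_n=n$ and the monicity of $F_m$, this fixes the remaining constant. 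Without an analogue of that lemma, your argument cannot produce $h^{(\ell)}_n$ at all, so the theorem as stated would remain unproven even if the certificate issue were repaired.
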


\begin{cor}
For all nonnegative integers~$n$ we have
\[
  \det(B_{n}-\lambda A_{n}) =
  (-2)^n \, h^{(0)}_{\lfloor n/2\rfloor} \, h^{(1)}_{\lceil n/2\rceil} \, \lambda \, F_{n-1}(\lambda) \, F_{n}(\lambda).
\]
\end{cor}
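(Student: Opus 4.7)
The plan is to obtain the corollary by a direct substitution: combine the block-decomposition identity already established at the end of Section~\ref{section_evEstimates}, namely
\[
  \det(B_n-\lambda A_n) = 2^n \det\bigl(A^{(0)}_{\lfloor n/2\rfloor}\bigr) \cdot \det\bigl(A^{(1)}_{\lceil n/2\rceil}\bigr),
\]
with the two closed-form evaluations from Theorem~\ref{thm.detA0A1}. Plugging in yields
\[
  \det(B_n-\lambda A_n) = 2^n \cdot (-1)^{\lfloor n/2\rfloor} h^{(0)}_{\lfloor n/2\rfloor} F_{2\lfloor n/2\rfloor}(\lambda) \cdot (-1)^{\lceil n/2\rceil} h^{(1)}_{\lceil n/2\rceil} \lambda F_{2\lceil n/2\rceil-1}(\lambda).
\]

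The sign simplification is immediate, since $\lfloor n/2\rfloor + \lceil n/2\rceil = n$ forces $(-1)^{\lfloor n/2\rfloor}(-1)^{\lceil n/2\rceil}\cdot 2^n = (-2)^n$. All that remains is to show that the two indices of the $F$-factors are always $\{n-1,n\}$, so that their product matches $F_{n-1}(\lambda)\,F_n(\lambda)$ as claimed.

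For this last point I would do a two-line case split. If $n=2m$, then $\lfloor n/2\rfloor = \lceil n/2\rceil = m$, so $2\lfloor n/2\rfloor = n$ and $2\lceil n/2\rceil-1 = n-1$. If $n=2m+1$, then $\lfloor n/2\rfloor = m$ and $\lceil n/2\rceil = m+1$, giving $2\lfloor n/2\rfloor = n-1$ and $2\lceil n/2\rceil-1 = n$. In both cases the multiset $\{2\lfloor n/2\rfloor,\, 2\lceil n/2\rceil-1\}$ equals $\{n-1,n\}$, and since multiplication is commutative this concludes the computation.

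There is no genuine obstacle here: the corollary is purely a bookkeeping consequence of Theorem~\ref{thm.detA0A1} together with the block-decomposition formula for $\det(B_n-\lambda A_n)$. The only mild subtlety worth spelling out for the reader is the parity case analysis that matches the indices of the two $F$-factors to $n-1$ and $n$; everything else is a one-line substitution.
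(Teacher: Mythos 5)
Your argument is correct and is exactly the paper's (implicit) justification: the corollary follows by substituting the two evaluations of Theorem~\ref{thm.detA0A1} into the block-decomposition identity $\det(B_n-\lambda A_n)=2^n\det\bigl(A^{(0)}_{\lfloor n/2\rfloor}\bigr)\det\bigl(A^{(1)}_{\lceil n/2\rceil}\bigr)$, using $\lfloor n/2\rfloor+\lceil n/2\rceil=n$ for the sign and the parity case split to identify the indices $n-1$ and $n$. Nothing further is needed.
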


The key ingredient for the proof of Theorem~\ref{thm.detA0A1} is the following
lemma which shows that the quantities $p^{(0)}_{n,j}$ and $p^{(1)}_{n,j}$
defined there are basically the entries of the last column of the inverses of
$A^{(0)}$ and $A^{(1)}$, respectively.

\begin{lemma}
\label{lem.invA0A1}
With
\begin{align*}
  p^{(0)}_{n,j} &=
    \frac{2^{2n+2j-3} \left(\frac32\right)_{2n-1} \left(n+\frac12\right)_{j-1}}{(n-1)! \, (2j-1)!}
    \sum_{m=0}^{n-1} \sum_{k=0}^{2n-2m-2} \frac{(-1)^{j+m} \, (2m+1)_{2k} \, \lambda^m}{4^{m+k} k! \, (2m+k-n-j+2)!},\\
  p^{(1)}_{n,j} &=
    \frac{4^{j-n} (4n-3)! \left(n-\frac12\right)_{j-1}}{(2 n-2)! \, (n-1)! \, (2 j-2)!}
    \sum_{m=0}^{n-1} \sum_{k=0}^{2n-2m-2} \frac{(-1)^{j+m} \, (2m)_{2k} \, \lambda^m}{4^{m+k} k! \, (2m+k-n-j+2)!},
\end{align*}
the following identities hold for all nonnegative integers~$n$ and for $1\leq i\leq n$:
\begin{align*}
  \sum_{j=1}^n a^{(0)}_{i,j} p^{(0)}_{n,j} &= \delta_{i,n} F_{2n}(\lambda),\\
  \sum_{j=1}^n a^{(1)}_{i,j} p^{(1)}_{n,j} &= \delta_{i,n} \lambda F_{2n-1}(\lambda).
\end{align*}
\end{lemma}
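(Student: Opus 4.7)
The plan is to verify both identities by creative telescoping, using the framework implemented in the HolonomicFunctions package. Both statements have the same structure, so I would focus on the first one; the second proceeds identically. Write $S(n,i,\lambda):=\sum_{j=1}^n a^{(0)}_{i,j}\,p^{(0)}_{n,j}$ and note that, after substituting the explicit formula for $p^{(0)}_{n,j}$ and splitting
\[
  a^{(0)}_{i,j}=\frac{(2i-1)(2j-1)}{2i+2j-3}-\frac{\lambda}{2i+2j-1},
\]
the quantity $S(n,i,\lambda)$ becomes a sum of two triple sums (over $j,m,k$) whose summands are proper hypergeometric in all their indices, with $n$, $i$, and $\lambda$ as parameters.

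Because the right-hand side contains a Kronecker delta, a single joint recurrence capturing both regimes is not natural, so I would prove the identity in two parts. \textbf{(a)} For $1\le i\le n-1$, show $S(n,i,\lambda)=0$. I would run creative telescoping on each of the two triple sums separately, producing annihilating operators (certificates that are rational in $j,m,k,i,n,\lambda$); the telescoping relations collapse the inner sums up to boundary contributions, and those boundary terms should vanish thanks to the reciprocal factorials $1/(2j-1)!$ and $1/(2m+k-n-j+2)!$, which kill contributions outside the stated summation ranges. \textbf{(b)} For $i=n$, what remains is a double sum in $m$ and $k$ (with $\lambda$-polynomial coefficients), and I would compare it coefficient by coefficient in $\lambda$ against the closed form \eqref{eq.f} of $F_{2n}(\lambda)$; each coefficient identity is a single-variable hypergeometric sum, tractable by Zeilberger's algorithm.

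To finish, I would verify a handful of initial cases (small $n$, and all admissible $i$) by direct matrix-vector computation, which together with the recurrences suffices to conclude the identity for all $n\ge 0$.

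The main obstacle will be computational rather than conceptual: the summand depends on four discrete indices and the continuous parameter $\lambda$, so the creative-telescoping ansatz has to search in a fairly large operator space, and the resulting certificates can be large. A secondary subtlety is handling the non-rectangular summation range $0\le k\le 2n-2m-2$ carefully, so that shifts in $n$ do not silently drop or introduce boundary terms; this is typically addressed by rewriting the inner bound as an Iverson bracket absorbed into the summand, or by extending the summation range and checking that the added summands vanish, as is standard in the holonomic systems approach.
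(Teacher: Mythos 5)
Your overall strategy is the same as the paper's: both identities are attacked with the holonomic systems approach, split into the cases $1\le i<n$ (where the sum must vanish) and $i=n$ (where it must equal $F_{2n}(\lambda)$, resp.\ $\lambda F_{2n-1}(\lambda)$), with recurrences obtained via closure properties and creative telescoping and the proof finished by finitely many initial values. However, there is one genuine gap in part (a) as you describe it: the certificates returned by standard creative telescoping for the outer sum over $j$ have denominators that vanish at integer points \emph{inside} the summation range $1\le j\le n$, so the telescoping relation cannot simply be summed over that range. Your boundary argument (vanishing of the reciprocal factorials $1/(2j-1)!$ and $1/(2m+k-n-j+2)!$ outside the range) does not cure this, because the failure occurs at interior poles, not at the boundary, and it is not removed by an Iverson-bracket rewriting of the range. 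This is precisely the obstruction met in the paper, and it is overcome there by constructing a different certificate, free of the problematic denominators, via an ansatz reminiscent of the polynomial ansatz; without some such extra step your part (a) stalls.

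Two smaller points. In part (b), the coefficient of $\lambda^m$ in $\sum_{j=1}^n a^{(0)}_{n,j}p^{(0)}_{n,j}$ is a \emph{double} sum (over $j$ and $k$), not a single-variable hypergeometric sum, so univariate Zeilberger does not apply directly; you would still need multivariate telescoping, whereas the paper instead performs a holonomic substitution $i=n$ to get a univariate recurrence for $l_{n,n}$ and checks that its operator is a left multiple of the second-order operator annihilating $F_{2n}$, plus a few initial values. Also, when concluding ``recurrences plus a handful of initial cases suffice,'' remember that the leading coefficients of the derived recurrences contain factors such as $(i-n+1)$ and $\bigl(4i^2+2i+\lambda-4n^2-2n\bigr)$ which can vanish, so one must analyze which points are actually reachable; in the paper this analysis pins down the needed initial conditions as $l_{1,2}$, $l_{1,3}$, $l_{1,4}$, $l_{2,3}$ for the case $i<n$.
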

\begin{proof}
These identities can be proven routinely using the holonomic systems
approach~\cite{Zeilberger90}. We have carried out the necessary calculations
using the HolonomicFunctions package~\cite{Koutschan09,Koutschan10b}. The
results are documented in the supplementary electronic material~\cite{elec}.

First we derive, using holonomic closure properties and creative telescoping,
a (left Gr\"obner) basis for the set of recurrence equations that
$p^{(0)}_{n,j}$ satisfies. Again applying closure properties (in this case for
multiplication) one obtains recurrences for the product $a^{(0)}_{i,j}
p^{(0)}_{n,j}$, and by creative telescoping, for its definite sum, which we
denote by $l_{i,n}$ (it is the left-hand side of the first identity).  Here we
face the problem of poles inside the summation range that are introduced by
the certificate of the telescopic relation. We solve this issue by
constructing a different certificate, free of the problematic denominators,
using an ansatz reminiscent of the polynomial
ansatz~\cite[Sec. 3.4]{Koutschan09}.  The recurrences for $l_{i,n}$ have the
following form (some polynomial coefficients are omitted for space reasons):
\begin{align*}
  & 16 n^4 (n+1)^2 (2 n+1)^4 (2 n+3)^2 (4 n+1) (i-n+1)^2 (2 i+2 n+3)^2 \\
    &\qquad \times\left(4 i^2+2 i+\lambda -4 n^2-2 n\right)^2 l_{i,n+2} = (\cdots)l_{i+1,n}+(\cdots)l_{i,n+1}+(\cdots)l_{i,n}, \\[1ex]
  & 2 n (2 n+1) (i-n+1) (2 i+2 n+3) \left(4 i^2+2 i+\lambda -4 n^2-2 n\right) l_{i+1,n+1} = {}\\
    &\qquad (\cdots)l_{i+1,n} + (\cdots)l_{i,n+1} + (\cdots)l_{i,n}, \\[1ex]
  & 2 (n-1) n (2 n-1) (2 n+1) (4 n+1)^2 (4 n+3) (i-n+1)^2 (i-n+2) (2 i+2 n+3) \\
    &\qquad \times\left(4 i^2+2 i+\lambda -4 n^2-2 n\right) l_{i+2,n} = (\cdots)l_{i+1,n} + (\cdots)l_{i,n+1} + (\cdots)l_{i,n}.
\end{align*}
From their support and their leading coefficients it becomes clear that when
we want to use them to compute $l_{i,n}$ for all $1\leq i<n$, then we have to
give the initial conditions $l_{1,2}$, $l_{1,3}$, $l_{1,4}$, and $l_{2,3}$.
By verifying that they all equal~$0$ we have shown that the first identity
holds for $i<n$.

For $i=n$ we can construct, by holonomic substitution, a univariate recurrence
satisfied by $l_{n,n}$. It turns out that the corresponding operator is a left
multiple of the second-order operator that annihilates $F_{2n}$. Also in this
case, the proof can be completed by checking a few initial conditions. The
proof of the second identity is established in an analogous way.
\end{proof}

\begin{lemma}
\label{lem.detLC}
The following determinant evaluations hold for all nonnegative integers~$n$:
\begin{align*}
  \det_{1\leq i,j\leq n} \left(\frac{1}{2i+2j-1}\right) &=
    \frac{1}{2^n} \prod_{i=1}^n \frac{\bigl((i-1)!\bigr){}^2}{\left(i+\frac12\right)_n}
    \quad \Bigl(= h^{(0)}_n\Bigr), \\
  \det_{1\leq i,j\leq n} \left(\frac{1}{2i+2j-3}\right) &=
    \frac{1}{2^n} \prod_{i=1}^n \frac{\bigl((i-1)!\bigr){}^2}{\left(i-\frac12\right)_n}
    \quad \Bigl( = h^{(1)}_n\Bigr).
\end{align*}
\end{lemma}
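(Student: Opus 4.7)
The plan is to recognize both identities as instances of the classical Cauchy determinant evaluation
\[
  \det_{1\leq i,j\leq n}\left(\frac{1}{x_i+y_j}\right) = \frac{\prod_{1\leq i<j\leq n}(x_j-x_i)(y_j-y_i)}{\prod_{1\leq i,j\leq n}(x_i+y_j)},
\]
and then to massage the resulting product into the claimed closed form built from factorials and Pochhammer symbols.

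For the first identity, I would write $\frac{1}{2i+2j-1}=\frac{1}{2}\cdot\frac{1}{i+(j-\frac12)}$, factor the common $\frac{1}{2}$ out of every matrix entry (contributing $2^{-n}$ to the determinant), and then apply Cauchy with the choice $x_i=i$ and $y_j=j-\frac12$. The numerator becomes $\bigl(\prod_{1\leq i<j\leq n}(j-i)\bigr)^2$, which equals $\bigl(\prod_{i=1}^n(i-1)!\bigr)^2$ after grouping the pairs by their larger index. The denominator factors as $\prod_{i=1}^n\prod_{j=1}^n (i+j-\tfrac12)$; for each fixed $i$ the inner product telescopes into the Pochhammer symbol $(i+\tfrac12)_n$ by the definition recalled at the end of the introduction. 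Combining these pieces yields exactly $h^{(0)}_n$.

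For the second identity the calculation is entirely parallel: I rewrite $\frac{1}{2i+2j-3}=\frac{1}{2}\cdot\frac{1}{i+(j-\frac32)}$, apply Cauchy with $x_i=i$ and $y_j=j-\frac32$, and obtain the same Vandermonde factor $\bigl(\prod_{i=1}^n(i-1)!\bigr)^2$ upstairs and $\prod_{i=1}^n(i-\tfrac12)_n$ downstairs, matching $h^{(1)}_n$.

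There is no real obstacle here; the only care needed is to check the translation between $\prod_{j=1}^n(i+c+j-1)$ and $(i+c)_n$ so that the half-integer shifts land in the correct Pochhammer, and to handle the boundary case $n=0$ (where both sides equal $1$ by the convention that the $0\times0$ determinant and an empty product are $1$). One could alternatively cite Cauchy's formula as a black box and simply record the specialization, but including the brief Vandermonde/Pochhammer manipulation makes the matching with \eqref{eq.h} transparent.
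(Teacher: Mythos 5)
Your proposal is correct and follows essentially the same route as the paper: both recognize the matrices as specializations of Cauchy's double alternant and then simplify the resulting product into the factorial/Pochhammer form of $h^{(0)}_n$ and $h^{(1)}_n$. The only cosmetic difference is that you factor out $\tfrac12$ from each entry and take $x_i=i$, $y_j=j-\tfrac12$ (resp.\ $j-\tfrac32$), whereas the paper specializes $x_k=2k$, $y_k=2k-1$ directly; the two computations are equivalent.
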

\begin{proof}
These determinants are special cases of Cauchy's classic double
alternant~\cite{Cauchy41}
\[
  \det_{1\leq i,j\leq n} \left(\frac{1}{x_i+y_j}\right) =
  \frac{\dprod_{1\leq i<j\leq n} (x_i-x_j)(y_i-y_j)}{\dprod_{1\leq i,j\leq n} (x_i+y_j)}
\]
where $x_1,\dots,x_n,y_1,\dots,y_n$ are indeterminates;
see also \cite[Thm. 12, Eq. (5.5)]{Krattenthaler05} and
\cite[Thm. 15]{Kuperberg02} for a proof by factor exhaustion.
In order to obtain the first assertion, we specialize $x_k=2k$ and
$y_k=2k-1$ for $1\leq k\leq n$, and obtain:
\begin{align*}
  \det_{1\leq i,j\leq n} \left(\frac{1}{2i+2j-1}\right) &=
  \frac{\dprod_{1\leq i<j\leq n} \!\! (2i-2j)^2}{\dprod_{1\leq i,j\leq n} \!\! (2i+2j-1)} =
  \frac{\dprod_{i=1}^n \bigl(2^{n-i}(n-i)!\bigr){}^2}{\dprod_{i=1}^n 2^n\,\bigl(i+\tfrac12\bigr)_n} \\
  &= \frac{1}{2^n} \prod_{i=1}^n \frac{\bigl((i-1)!\bigr){}^2}{\bigl(i+\tfrac12\bigr)_n}.
\end{align*}
The second assertion is derived in a completely analogous way.
Note also that these two determinants can be proven routinely
using the holonomic ansatz~\cite{Zeilberger07}.
\end{proof}

\begin{proof}[Proof of Theorem~\ref{thm.detA0A1}]
Lemma~\ref{lem.invA0A1} shows that the vector
$\bigl(p^{(\ell)}_{n,1},\dots,p^{(\ell)}_{n,\vphantom{1}n}\bigr)^T$ is, up to
a scalar multiple, the $n$-th column of $\bigl(A^{(\ell)}_n\bigr)^{-1}$ for $\ell=0,1$. Since
the entries of this vector (and of course those of the matrices $A^{(\ell)}$
itself) are polynomials in~$\lambda$, this shows that
$\det\bigl(A^{(0)}_n\bigr) \mid F_{2n}(\lambda)$ and that
$\det\bigl(A^{(1)}_n\bigr) \mid \lambda F_{2n-1}(\lambda)$.  Note that both
polynomials $F_{2n}(\lambda)$ and $\lambda F_{2n-1}(\lambda)$ have
degree~$n$ in~$\lambda$.  Next we argue that also the determinants of
$A^{(0)}_n$ and $A^{(1)}_n$ have degree~$n$ in~$\lambda$, which is the maximal
possible---taking into account that the matrix entries are linear polynomials
in~$\lambda$.  Observe that the matrix entries in Lemma~\ref{lem.detLC} are
precisely $\lim_{\lambda\to\infty} -a^{(\ell)}_{i,j}/\lambda$.  Thus
Lemma~\ref{lem.detLC} implies that $\det\bigl(A^{(\ell)}_n/\lambda\bigr) =
\lambda^{-n}\det\bigl(A^{(\ell)}_n\bigr)$ converges to a nonzero constant
(only depending on~$n$) as $\lambda$ goes to infinity. Hence
$\deg_{\lambda}\bigl(\det A^{(\ell)}_n\bigr)=n$ for $\ell=0,1$, which means
that the two determinants are now determined up to a multiplicative constant
not depending on~$\lambda$. By noting that the polynomials $F_n(\lambda)$ are
monic and that the expressions given in Lemma~\ref{lem.detLC} are, up to sign,
the leading coefficients of $\det\bigl(A^{(0)}_n\bigr)$ and
$\det\bigl(A^{(1)}_n\bigr)$, respectively, the assertion of the theorem is
proven.
\end{proof}

Note that our proof of the determinant evaluations in Theorem~\ref{thm.detA0A1}
is very reminiscent of Zeilberger's holonomic ansatz~\cite{Zeilberger07}.
In fact, the only difference is that we chose to normalize the vector
$v_n=\bigl(p^{(0)}_{n,1},\dots,p^{(0)}_{n,\vphantom{1}n}\bigr)^T$ in a different
way as Zeilberger would do it: while he suggests the normalization $p^{(0)}_{n,n}=1$,
we normalize $v_n$ such that $A^{(0)}_nv_n=\bigl(0,\dots,0,q_n(\lambda)\bigr)^T$ and
$q_n(\lambda)$ is a monic polynomial with $\deg_{\lambda}(q_n)=n$. (The same
discussion applies to $A^{(1)}_n$, of course.)

In the original formulation of the holonomic ansatz, i.e., with the
normalization $p^{(0)}_{n,n}=1$, the final result in the case of success is a
holonomic recurrence, i.e., a linear recurrence with polynomial coefficients,
for $\det\bigl(A^{(0)}_{n+1}\bigr)/\det\bigl(A^{(0)}_n\bigr)$.  However, this
ansatz is not at all guaranteed to succeed: even if the matrix entries are
holonomic, this doesn't mean that the sequence of quotients of consecutive
determinants is a holonomic sequence. The determinant of $A^{(0)}_n$ is such
an example: the polynomials $\bigl(F_{2n}(\lambda)\bigr)_{n\geq1}$ satisfy the
second-order recurrence
\begin{multline*}
  (4n+3) F_{2n+4}(\lambda) + (4n+5)(16n^2+40n-2\lambda+21) F_{2n+2}(\lambda) \\
  +(4n+7) \lambda^2 F_{2n}(\lambda) = 0,
\end{multline*}
which means that (most likely) the quotient
$F_{2n+2}(\lambda)/F_{2n}(\lambda)$ doesn't satisfy a holonomic recurrence
of any order. (We have strong evidence that this quotient is non-holonomic,
but we haven't tried to prove this rigorously.) Provided that this is true,
the original holonomic ansatz must fail.

Thanks to the additional parameter $\lambda$ that appears polynomially in the
matrix entries, we can identify the determinant of $A^{(0)}_n$ in the
denominators of the inverse matrix. Thus a natural normalization of the
vector~$v_n$ would be such that $A^{(0)}_nv_n=\bigl(0,\dots,0,\det
A^{(0)}_n\bigr)^T$. In that case, the final result would be a holonomic
recurrence for $\det A^{(0)}_n$; hence this variant is applicable when the
determinant itself is a holonomic sequence in~$n$. Unfortunately, that's not
the case for the matrix $A^{(0)}_n$ because of the non-holonomic
prefactor~$h^{(0)}_n$. This explains why we had to choose yet another
normalization, in order to separate the holonomic and the non-holonomic part
of the determinant. For each part then we had to prove a different determinant
evaluation: for the holonomic ``polynomial part'' this was done in
Lemma~\ref{lem.invA0A1}, for the non-holonomic ``constant part'' in
Lemma~\ref{lem.detLC}. It is not unlikely that there are many more examples
of determinants where the original holonomic ansatz fails, but where the
modifications described here lead to success.

At the end of this section we want to briefly discuss an alternative way to
derive the polynomials $F_n(\lambda)$. In our above considerations we started
with the monomial basis when formulating the eigenvalue problem. Alternatively,
one could employ the Legendre basis leading to the following determinant:
\[
  D_n = \det_{1\leq i,j\leq n} \biggl(\int_{-1}^1 P'_i(x)P'_j(x)\,\mathrm{d}x-\lambda\int_{-1}^1 P_i(x)P_j(x)\,\mathrm{d}x\biggr)
\]
(note that only the matrix entries on the main diagonal depend
on~$\lambda$). Indeed any basis $(\varphi_k)_{1\leq k\leq n}$ for the space
$\hat{V}_n$ can be used for the computation of the eigenvalues given in
\eqref{eq.ev}. So by construction, this determinant leads to the same family
of polynomials $F_n(\lambda)$, and in fact we have that
$\lambda\det(B_n-\lambda A_n)/D_{n+1}$ does not depend on~$\lambda$. Doing the
same block decomposition as before, we obtain the two families of matrices
\[
  \biggl(2m(2m+1)-\delta_{i,j}\frac{2\lambda}{4i+1}\biggr)_{1\leq i,j\leq n}
  \qquad\text{and}\qquad
  \biggl(2m(2m-1)-\delta_{i,j}\frac{2\lambda}{4i-1}\biggr)_{1\leq i,j\leq n}
\]
where $m$ stands for $\min(i,j)$, whose determinants are given by
\[
  \frac{(-1)^n}{2^n\left(\frac54\right)_n}F_{2n+1}(\lambda)
  \quad\text{resp.}\quad
  \frac{(-1)^n}{2^n\left(\frac34\right)_n}F_{2n}(\lambda).
\]
Note that these determinants are ``nicer'' than the ones we considered above,
because their leading coefficients form holonomic sequences (actually they are
hypergeometric). So it seems that we should have started with this formulation.
But there is also a drawback: the matrix entries are defined in terms of $\min(i,j)$,
which on the one hand yields nicely structured matrices (constant along ``hooks'',
with a perturbation on the diagonal) such as
\[
  \begin{pmatrix}
   2-\frac{2 x}{3} & 2 & 2 & 2 & 2 & \dots \\
   2 & 12-\frac{2 x}{7} & 12 & 12 & 12 & \dots  \\
   2 & 12 & 30-\frac{2 x}{11} & 30 & 30 & \dots  \\
   2 & 12 & 30 & 56-\frac{2 x}{15} & 56 & \dots  \\
   2 & 12 & 30 & 56 & 90-\frac{2 x}{19} & \dots  \\
   \vdots & \vdots & \vdots & \vdots & \vdots & \ddots
   \end{pmatrix},
\]
but on the other hand requires case distinctions that make the proofs of the
relevant identities (the analog of Lemma~\ref{lem.invA0A1}) more complicated.

\section{Upper and Lower Bounds on the Maximal Root of $F_n(\lambda)$}
\label{section_bounds}
In this section we give lower and upper bounds on the maximal root
of~$F_n(\lambda)$. Recall that we are interested in the maximal root of
\[
  \det(B_n-\lambda A_n)=c_n \, \lambda \, F_n(\lambda) \, F_{n-1}(\lambda).
\]
We will prove that the maximal root of $\det(B_n-\lambda A_n)$
is equal to the maximal root of $F_n(\lambda)$. We prove this in
Lemma~\ref{mxroot} which is based on Lemmas \ref{comparison},
\ref{boundlemma}, and~\ref{lowerbound}, which are technical in nature. A lower
and an upper bound on the maximal root of $F_n(\lambda)$ are given in
Lemma~\ref{lowerbound}. A better upper bound is given in
Lemma~\ref{upperbound}. These two lemmas are based on Lemma~\ref{boundlemma}.
Recall the definition of $\nu(n)=\lfloor\frac{n}{2}\rfloor$.

\begin{defn}
To simplify notation in this section, we introduce the
polynomials
\begin{equation}\label{eq.fj}
  f_j(n) := \frac{(n-2j+1)_{4j}}{4^j(2j)!},
\end{equation}
which correspond (up to sign) to the coefficients of $F_n(\lambda)$:
\[
  F_n(\lambda) = \sum_{j=0}^{\nu(n)} (-1)^j f_j(n) \lambda^{\nu(n)-j} =
  \lambda^{\nu(n)} - f_1(n)\lambda^{\nu(n)-1} + f_2(n)\lambda^{\nu(n)-2} - \dots
\]
In particular, we have
\begin{align*}
  f_1(n) &= \frac{(n-1)_4}{8} = \frac{n(n-1)(n+1)(n+2)}{8}, \\
  f_2(n) &= \frac{(n-3)_8}{384}, \\
  f_3(n) &= \frac{(n-5)_{12}}{46080}.
\end{align*}
\end{defn}

\begin{lemma}\label{comparison}
Let $n\in\mathbb{N}$ with $n\geq2$. If $\lambda \in\mathbb{R}$ is a root of $F_{n}$
with $\lambda>\frac{1}{2}f_1(n)$ then $F_{n+1}(\lambda)<0$.
\end{lemma}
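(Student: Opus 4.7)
The plan is to evaluate $F_{n+1}(\lambda)$ at a root of $F_n$ by means of an explicit ``coefficient difference'' identity and then to analyze the sign of the resulting expression.

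The algebraic starting point is the ratio
\[
  \frac{f_j(n+1)}{f_j(n)} \;=\; \frac{n+2j+1}{n-2j+1},
\]
which is read off directly from~\eqref{eq.fj}. I would then split on the parity of~$n$. When $n=2k$, both $F_n$ and $F_{n+1}$ are monic of degree~$k$ in~$\lambda$, so that
\[
  F_{n+1}(\lambda)-F_n(\lambda) \;=\; \sum_{j=1}^{k}(-1)^j\,\frac{4j\,f_j(n)}{n-2j+1}\,\lambda^{k-j}\;=:\;R_n(\lambda).
\]
When $n=2k+1$, the analogous manipulation produces
\[
  F_{n+1}(\lambda)-\lambda F_n(\lambda) \;=\; (-1)^{k+1}f_{k+1}(n+1) + \sum_{j=1}^{k}(-1)^j\,\frac{4j\,f_j(n)}{n-2j+1}\,\lambda^{k+1-j}\;=:\;R_n(\lambda).
\]
In either case $F_n(\lambda_0)=0$ collapses the identity to $F_{n+1}(\lambda_0)=R_n(\lambda_0)$, reducing the lemma to proving $R_n(\lambda)<0$ whenever $\lambda\geq\tfrac12 f_1(n)$.

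For this sign estimate, the leading coefficient of $R_n$ equals $-4f_1(n)/(n-1)$ in both parity cases, so $R_n(\lambda)\to-\infty$, and it suffices to bound the largest real root of $R_n$ from above by $\tfrac12 f_1(n)$. The route I would take uses the second handy ratio
\[
  \frac{f_{j+1}(n)}{f_j(n)} \;=\; \frac{(n-2j-1)(n-2j)(n+2j+1)(n+2j+2)}{4(2j+1)(2j+2)}
\]
to control consecutive terms of the alternating sum defining $R_n$. At the threshold $\lambda=\tfrac12 f_1(n)=\tfrac{1}{16}(n-1)n(n+1)(n+2)$, a direct calculation shows that the ratio of the $(j{+}1)$-st term to the $j$-th in absolute value is bounded strictly below~$1$ uniformly in~$n$; for $j=1$ this ratio equals $\tfrac{2(n-2)(n+3)(n+4)}{3n(n+1)(n+2)}$, and the subsequent ratios are smaller. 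Pairing consecutive terms of opposite sign then yields $R_n\bigl(\tfrac12 f_1(n)\bigr)<0$, and a parallel bound on $R_n'$ at the same threshold shows that $R_n$ is monotone decreasing on $\bigl[\tfrac12 f_1(n),\infty\bigr)$, giving the desired negativity throughout.

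The main obstacle I anticipate is making this term-by-term domination rigorous across the whole range of $j$ and $n$. The ratios behave well in the bulk, but the linear factors $4j$ and $(n-2j+1)^{-1}$ can be unfavourable near the endpoints $j=1$ and $j=k$, possibly forcing a splitting of the sum or a different grouping of terms. Should this direct estimate become unwieldy, a natural fallback is induction on~$n$, propagating the sign of $R_n(\lambda)$ via the three-term recurrence for $F_{2n}$ displayed in Section~\ref{section_detEvaluation} together with its odd-indexed analogue.
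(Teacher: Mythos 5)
Your plan is essentially the paper's own proof: the same parity split, the same quantities $F_{n+1}-F_n$ (for even $n$) and $F_{n+1}-\lambda F_n$ (for odd $n$), the same reduction to showing that each term of the alternating sum dominates the next one for $\lambda>\tfrac12 f_1(n)$, and the same pairing of consecutive terms (with the parity-of-$k$ bookkeeping) to conclude negativity at a root of $F_n$. The only divergences are minor: since each term ratio scales like $1/\lambda$, domination at the threshold $\lambda=\tfrac12 f_1(n)$ automatically persists for all larger $\lambda$, so your extra monotonicity step via $R_n'$ is unnecessary; and in the odd case the stray constant term $(-1)^{k+1}f_{k+1}(n+1)$ is not covered by the ``subsequent ratios are smaller'' observation (note $f_{k+1}(n)=0$ there), so it needs the separate easy check you already anticipate---the paper instead proves the ratio inequality uniformly in $j$ by clearing denominators and verifying a polynomial positivity (by hand in the even case, by cylindrical algebraic decomposition in the odd case).
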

\begin{proof}
We distinguish two cases depending on the parity of~$n$.

\emph{Case $n=2k+2$. }  We have that $\nu(n)=k+1$ and $\nu(n+1)=k+1$.
Define 
\[
  G_n(x):=F_{n+1}(x)-F_n(x)=\sum_{j=0}^{k}(-4)^{j-k}\frac{(2k-2j+3)_{2k+2}}{(2j+1)!}(j-k-1)x^j.
\]
Our goal is to show that $G_n(x)<0$ for $x>\frac{1}{2}f_1(n)$; the claim then
follows immediately by using the assumption that $\lambda$ is a root of $F_n$.
For this purpose we define $g_j(n)$ to be the absolute value of the
coefficient of $x^j$ in $G_n(x)$ so that
\[
  g_j(n) = 4^{j-k}\frac{(2k-2j+3)_{2k+2}}{(2j+1)!}(k-j+1).
\]
We now want to prove that $\lambda\, g_j(n) >g_{j-1}(n)$ for $1\leq j\leq k$ and
$\lambda>\frac12f_1(n)$, which is implied by
\[
  \frac12f_1(n)\,g_j(n)>g_{j-1}(n),\qquad (1\leq j\leq k).
\]
Substituting for $g_j(n)$ we obtain
\[
  \frac12f_1(n)4^{j-k}\,\frac{(k-j+1)(2k-2j+3)_{2k+2}}{(2j+1)!}>4^{j-1-k}\frac{(k-j+2)(2k-2j+5)_{2k+2}}{(2j-1)!}.
\]
Multiplying this inequality by $(2j-1)!$ and dividing by $4^{j-1-k}(2k-2j+5)_{2k}$,
we obtain
\[
  2f_1(n)\,\frac{(k-j+1)(2k-2j+3)(2k-2j+4)}{2j(2j+1)}>(k-j+2)(4k-2j+5)(4k-2j+6).
\]
Plugging in~$f_1(n)=\frac18(2k+1)(2k+2)(2k+3)(2k+4)$ and substituting $j\to k-j$ leads to
\begin{multline*}
(16j^3+72j^2+88j+16)k^4+(80j^3+360j^2+424j+48)k^3\\
+(172j^3+774j^2+906j+92)k^2+(196j^3+882j^2+1070j+180)k\\
-16j^5-112j^4-212j^3+16j^2+276j+72>0
\end{multline*}
for $0\leq j\leq k-1$. Since $k>j$, the above inequality is true if it is true
for $k=j$. Substituting $k=j$ yields
\[
  16j^7+152j^6+604j^5+1298j^4+1624j^3+1178j^2+456j+72>0,
\]
which is obviously true for all $j\geq 0$. 
Now note that $G_n(\lambda)=F_{n+1}(\lambda)$ because $F_n(\lambda)=0$
by our assumption on~$\lambda$. Finally note that if $k$ is even then 
\[
  G_n(\lambda) = \underbrace{-g_{0\vphantom{j}}}_{\textstyle<0} +
  \sum_{j=1}^{k/2} \underbrace{\bigl(-g_{2j}(n)\lambda+g_{2j-1}(n)\bigr)}_{\textstyle<0}\lambda^{2j-1} < 0,
\]
and if $k$ is odd then
\[
  G_n(\lambda)=\sum_{j=0}^{(k-1)/2}\underbrace{\bigl(-g_{2j+1}(n)\lambda+g_{2j}(n)\bigr)}_{\textstyle<0}\lambda^{2j} < 0.
\]

\emph{Case $n=2k+1$. }  We have that $\nu(n)=k$ and $\nu(n+1)=k+1$.
This time let
\[
  G_n(x):=F_{n+1}(x)-xF_n(x)=
  \sum_{j=0}^{k} (-1)^{j-k+1} \underbrace{4^{j-k} \frac{(2k-2j+3)_{2k+1}}{(2j)!}(k-j+1)}_{\textstyle =:g_j(n)}x^j,
\]
and denote by $g_j(n)$ the absolute value of the coefficient of $x^j$ in
$G_n$, as before.  Again, we want to prove that $G_n(x)<0$ for
$x>\frac{1}{2}f_1(n)$, but we don't want to repeat all the arguments from the
first case. Instead we discuss how this proof can be supported by computer
algebra techniques. First we want to point out that $G_n(x)$, being defined as
a hypergeometric sum, satisfies a linear recurrence equation, and that
inequalities involving such quantities can be proven
algorithmically~\cite{GerholdKauers05}.  However, our experiments suggest that
the present example is computationally too expensive, and therefore we let
computer algebra enter at a later stage of the proof.  In order to prove that
$\lambda \, g_j(n) >g_{j-1}(n)$ for $\lambda>\frac12f_1(n)$ and $1\leq j\leq
k$, we focus on the stronger statement
$\frac12f_1(n)\,g_j(n)/g_{j-1}(n)>1$. Elementary calculations exploiting the
hypergeometric nature of~$g_j(n)$ that are analogous to the previous case lead
to the rational function inequality
\[
  \frac{k(2k+1)(2k+2)(2k+3)(k-j+1)(2k-2j+3)(k-j+2)}{4j(2j-1)(k-j+2)(2k-j+2)(4k-2j+5)}>1.
\]
Now we employ cylindrical algebraic decomposition~\cite{Collins75} to
establish the correctness of the previous inequality: naming it \texttt{ineq},
the Mathematica command\medskip\\ \centerline{
  \texttt{CylindricalDecomposition[Implies[1 <= j <= k, ineq], \string{j, k\string}]}
}\medskip\\ yields \texttt{True} in a fraction of a second~\cite{elec}.

By the assumption on~$\lambda$ we have that $G_n(\lambda)=F_{n+1}(\lambda)$ because $\lambda F_n(\lambda)=0$. 
The proof is concluded by noting that if $k$ is even then 
\[
  G_n(\lambda) = \underbrace{-g_{0\vphantom{j}}}_{\textstyle<0} +
  \sum_{j=1}^{k/2} \underbrace{\bigl(-g_{2j}(n)\lambda+g_{2j-1}(n)\bigr)}_{\textstyle<0}\lambda^{2j-1} < 0
\]
and if $k$ is odd then
\[
  G_n(\lambda)=\sum_{j=0}^{(k-1)/2}\underbrace{\bigl(-g_{2j+1}(n)\lambda+g_{2j}(n)\bigr)}_{\textstyle<0}\lambda^{2j} < 0.
\]
\end{proof}

\begin{lemma}\label{boundlemma}
If $\lambda>\frac12f_1(n)$, then $\lambda\,f_j(n)>f_{j+1}(n)$ for $1\leq j\leq\nu(n)-1$.
\end{lemma}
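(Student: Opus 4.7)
The plan is to prove the equivalent inequality $\lambda > f_{j+1}(n)/f_j(n)$, which will follow from the hypothesis $\lambda > \tfrac12 f_1(n)$ combined with the auxiliary bound
\[
  \frac{f_{j+1}(n)}{f_j(n)} \leq \frac{1}{2} f_1(n) \qquad (1 \leq j \leq \nu(n)-1).
\]
Note that in this range both $f_j(n)$ and $f_{j+1}(n)$ are strictly positive (the Pochhammer $(n-2j+1)_{4j}$ has only positive factors whenever $n \geq 2j$, which is forced by $j \leq \nu(n)$), so the ratio is well-defined and multiplying through by $f_j(n)$ at the end preserves the strict inequality.

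First I would compute the ratio explicitly using $f_j(n) = (n-2j+1)_{4j}/(4^j (2j)!)$. Since $(n-2j-1)_{4j+4}$ extends $(n-2j+1)_{4j}$ by two factors on each side, telescoping yields
\[
  \frac{f_{j+1}(n)}{f_j(n)} = \frac{(n-2j-1)(n-2j)(n+2j+1)(n+2j+2)}{8(j+1)(2j+1)}.
\]
Substituting $f_1(n) = n(n-1)(n+1)(n+2)/8$, the auxiliary bound becomes the polynomial inequality
\[
  (2j+1)(j+1)\,n(n-1)(n+1)(n+2) \geq 2(n-2j-1)(n-2j)(n+2j+1)(n+2j+2),
\]
to be established for integers $j \geq 1$ and $n \geq 2j+2$ (the range imposed by $j \leq \nu(n)-1$).

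The main step, and expected obstacle, is proving this bivariate polynomial inequality cleanly. A convenient substitution is $u := n(n+1)$, which turns $n(n-1)(n+1)(n+2) = u(u-2)$ and
\[
  (n-2j-1)(n-2j)(n+2j+1)(n+2j+2) = \bigl(u - 2j(2j+1)\bigr)\bigl(u - (2j+1)(2j+2)\bigr),
\]
reducing the claim to $P(u) \geq 0$ where $P$ is a quadratic in $u$ with leading coefficient $2j^2 + 3j - 1 > 0$ and linear coefficient $12j^2 + 10j + 2 > 0$. Consequently $P$ is strictly increasing on $u \geq 0$, so it suffices to verify $P((2j+2)(2j+3)) \geq 0$ (which corresponds to the extremal case $n = 2j+2$). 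Expansion produces a univariate polynomial in $j$ of degree $6$ whose positive-coefficient terms dominate the single negative constant for $j \geq 1$, making the inequality manifest.

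In the spirit of Lemma~\ref{comparison}, one could equivalently let cylindrical algebraic decomposition certify the rational inequality $\tfrac12 f_1(n) \, f_j(n) / f_{j+1}(n) \geq 1$ directly over the region $j \geq 1$, $n \geq 2j+2$; the symbolic manipulations above are simple enough that I would expect this to terminate quickly. Either way, combining the auxiliary bound with $\lambda > \tfrac12 f_1(n)$ and multiplying by $f_j(n) > 0$ delivers $\lambda f_j(n) > f_{j+1}(n)$.
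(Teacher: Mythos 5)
Your proposal is correct and follows essentially the same route as the paper: reduce to the ratio inequality $\tfrac12 f_1(n)\,f_j(n)/f_{j+1}(n)\geq 1$, clear denominators to a polynomial inequality in $n$ and $j$, and settle it at the extremal case $n=2j+2$ (your expanded form matches the paper's \eqref{eq.ineq3} under $u=n(n+1)$, and the boundary polynomial is the paper's $32j^6+208j^5+\dots+60j-24>0$). The only cosmetic difference is that your substitution $u=n(n+1)$ makes explicit the monotonicity argument the paper leaves implicit, and the paper offers cylindrical algebraic decomposition as the primary proof with the hand computation as the alternative, whereas you reverse that emphasis.
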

\begin{proof}
The problem is equivalent to proving
$\frac12f_1(n)f_j(n)/f_{j+1}(n)>1$.
Substituting~\eqref{eq.fj} for $f_j(n)$ we obtain
\[
  \frac{(j+1)(2j+1)(n-1)n(n+1)(n+2)}{2(n-2j-1)(n-2j)(n+2j+1)(n+2j+2)}>1.
\]
Again, this can be proven routinely using cylindrical algebraic
decomposition~\cite{elec}. Alternatively, we clear denominators and collect terms:
\begin{multline}\label{eq.ineq3}
  \left(2 j^2+3 j-1\right) n^4+2 \left(2 j^2+3 j-1\right) n^3+\left(14 j^2+13 j+1\right) n^2+{}\\
  2 (2 j+1) (3 j+1) n-8 j (j+1) (2 j+1)^2 > 0.
\end{multline}
Now observe that $1\leq j\leq\nu(n)-1$ implies $n>2j+2$ and that~\eqref{eq.ineq3}
holds for all $n>2j+2$ if one can show that it holds for $n=2j+2$.
Substituting $n=2j+2$ into~\eqref{eq.ineq3} gives
\[
  32 j^6+208 j^5+536 j^4+700 j^3+424 j^2+60 j-24 > 0
\]
which is clearly true for all $j\geq1$.  
\end{proof}

\begin{defn}
For $n\geq 2$ we define $\lambda_n$ to be the maximal root of~$F_n(\lambda)$.
\end{defn}
We are now ready to give an upper and a lower bound for~$\lambda_n$.

\begin{lemma}\label{lowerbound}
For $n\geq 2$ the maximal root $\lambda_n$ satisfies $m(n)\leq\lambda_n\leq f_1(n)$ with
\begin{align*}
  m(n) &:= \frac{f_1(n)}{2}+\sqrt{\frac{f_1(n)^2}{4}-f_2(n)}\\
       &\phantom{:}= \frac{f_1(n)}{2}\left(1+\sqrt{1-\frac{2}{3}\frac{(n-2)(n-3)(n+3)(n+4)}{n(n-1)(n+1)(n+2)}}\right).
\end{align*}
Moreover, $\lambda_n<f_1(n)$ for $n\geq4$ and $m(n)<\lambda_n$ for $n\geq6$.
\end{lemma}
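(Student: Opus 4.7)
The plan is to evaluate $F_n$ at the two candidate endpoints $f_1(n)$ and $m(n)$, show that $F_n(f_1(n)) \geq 0$ and $F_n(m(n)) \leq 0$, and then invoke the intermediate value theorem together with the monicity of $F_n$ (which guarantees $F_n(\lambda) \to +\infty$). Both inequalities will be extracted from the alternating-sign expansion $F_n(\lambda) = \sum_{j=0}^{\nu(n)} (-1)^j f_j(n) \lambda^{\nu(n)-j}$ by grouping consecutive terms into sign-definite pairs, with the sign of each pair supplied by Lemma \ref{boundlemma}.

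For the upper bound, substituting $\lambda = f_1(n)$ annihilates the two leading terms, and the remainder $\sum_{j=2}^{\nu(n)} (-1)^j f_j(n) f_1(n)^{\nu(n)-j}$ splits into blocks of the form $f_1(n)^{\nu(n)-2k-1}\bigl(f_{2k}(n) f_1(n) - f_{2k+1}(n)\bigr)$. Each is strictly positive by Lemma \ref{boundlemma} (applied at $\lambda = f_1(n) > \tfrac{1}{2} f_1(n)$), and when $\nu(n)$ is even we additionally collect an unpaired $+f_{\nu(n)}(n) > 0$ at the tail. This yields $F_n(f_1(n)) > 0$ whenever $\nu(n) \geq 2$, and the same grouping, evaluated at any $\lambda > f_1(n)$, shows $F_n(\lambda) > 0$; hence $\lambda_n \leq f_1(n)$ always and $\lambda_n < f_1(n)$ for $n \geq 4$.

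For the lower bound, the crucial observation is that $m(n)$ is by construction the larger root of $x^2 - f_1(n) x + f_2(n) = 0$, so the three leading terms of $F_n(m(n))$ collapse: $m^{\nu(n)} - f_1 m^{\nu(n)-1} + f_2 m^{\nu(n)-2} = m^{\nu(n)-2}(m^2 - f_1 m + f_2) = 0$. The residual $\sum_{j=3}^{\nu(n)} (-1)^j f_j(n) m(n)^{\nu(n)-j}$ is again grouped into consecutive pairs, producing blocks $-m(n)^{\nu(n)-2k-2}\bigl(f_{2k+1}(n) m(n) - f_{2k+2}(n)\bigr)$ and, when $\nu(n)$ is odd, a leftover singleton equal to $-f_{\nu(n)}(n)$. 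To invoke Lemma \ref{boundlemma} here I must first verify $m(n) > \tfrac{1}{2} f_1(n)$, which reduces to the polynomial inequality $f_1(n)^2 > 4 f_2(n)$, i.e.\ $3n(n-1)(n+1)(n+2) > 2(n-3)(n-2)(n+3)(n+4)$ for $n \geq 4$ (and is trivial for $n \in \{2,3\}$, where $f_2(n) = 0$). Lemma \ref{boundlemma} then forces every block and the leftover to be strictly negative, so $F_n(m(n)) < 0$ whenever $\nu(n) \geq 3$, yielding $\lambda_n > m(n)$ for $n \geq 6$.

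The small cases are straightforward: for $n \in \{2, 3\}$, $\nu(n) = 1$ and $f_2(n) = 0$, so $F_n(\lambda) = \lambda - f_1(n)$ and $m(n) = \lambda_n = f_1(n)$; for $n \in \{4, 5\}$, $\nu(n) = 2$ and $F_n$ is literally the quadratic defining $m(n)$, so $\lambda_n = m(n) < f_1(n)$. The main obstacle I anticipate is pure bookkeeping: the four-way casework on the parity of $\nu(n)$ (at each of the two endpoints) must be handled carefully so that no residual singleton is overlooked and every paired index $j$ stays within the range $1 \leq j \leq \nu(n)-1$ where Lemma \ref{boundlemma} applies. No new ideas beyond the pairing technique and the quadratic-root identity for $m(n)$ are needed.
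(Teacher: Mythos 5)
Your proposal is correct and follows essentially the same route as the paper's proof: evaluate $F_n$ at $f_1(n)$ and at the quadratic root $m(n)$, group the alternating tail into sign-definite pairs via Lemma~\ref{boundlemma}, and conclude with monicity and the intermediate value theorem, with the same small-case treatment. Your explicit verification that $m(n)>\tfrac12 f_1(n)$ (via $f_1(n)^2>4f_2(n)$) is a detail the paper merely asserts, but it does not change the argument.
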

\begin{proof}
If $n\in\{2,3\}$ then obviously $m(n)=\lambda_n=f_1(n)$ holds.
So let now $n\geq4$ be fixed and set $\nu:=\nu(n)$, i.e., $\nu\geq2$.
In Lemma~\ref{boundlemma} we proved that if $\lambda>\frac12f_1(n)$ then
$\lambda\,f_j(n)>f_{j+1}(n)$. Consequently, under this assumption
on~$\lambda$, we get: if $\nu$ is even then
\[
  \sum_{j=2}^{\nu}(-1)^{j}f_j(n)\lambda^{\nu-j} =
  \sum_{k=1}^{\nu/2-1} \!\! \underbrace{\bigl(\lambda\,f_{2k}(n)-f_{2k+1}(n)\bigr)\lambda^{\nu-2k-1}}_{\textstyle>0} \,
  + \underbrace{f_\nu(n)}_{\textstyle>0}>0,
\]
and if $\nu$ is odd then
\[
  \sum_{j=2}^\nu (-1)^j f_j(n) \lambda^{\nu-j} =
  \sum_{k=1}^{(\nu-1)/2} \!\! \underbrace{\bigl(\lambda\,f_{2k}(n)-f_{2k+1}(n)\bigr)\lambda^{\nu-2k-1}}_{\textstyle>0} > 0.
\]
In particular let now $\lambda\geq f_1(n)$. Then 
\[
  F_n(\lambda) = \underbrace{\lambda^{\nu}-f_1(n)\lambda^{\nu-1}}_{\textstyle\geq 0}
  + \sum_{j=2}^\nu (-1)^j f_j(n) \lambda^{\nu-j} >0.
\]
Therefore the maximal root of $F_n(\lambda)$ cannot exceed $f_1(n)$,
which proves the upper bound. Analogously one finds that
\[
  \sum_{j=3}^\nu (-1)^j f_j(n) \lambda^{\nu-j} \leq 0,
\]
which is strict for all $n\geq6$.
Then for $\lambda = \frac12f_1(n)+\sqrt{\frac14f_1(n)^2-f_2(n)} > \frac12f_1(n)$ we have
\[
  F_n(\lambda) = \underbrace{\lambda^{\nu}-f_1(n)\lambda^{\nu-1}+f_2(n)\lambda^{\nu-2}}_{\textstyle=0}
  + \sum_{j=3}^\nu (-1)^j f_j(n) \lambda^{\nu-j} \leq 0.
\]
Since $F_n(\lambda)\leq0$ and $\lim_{x \to \infty}F_n(x)=+\infty$, the polynomial $F_n(x)$ has a root for $x\geq\lambda$.
This proves the lower bound.
\end{proof}

\begin{lemma}\label{mxroot}
Let $n\geq2$. Then $\lambda_{n+1}>\lambda_n$. 
\end{lemma}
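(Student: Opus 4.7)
The plan is to chain together Lemma~\ref{comparison}, Lemma~\ref{lowerbound}, and the intermediate value theorem. Since $\lambda_n$ is a root of $F_n$, if I can show that $\lambda_n > \tfrac{1}{2}f_1(n)$, then Lemma~\ref{comparison} immediately yields $F_{n+1}(\lambda_n) < 0$. The polynomial $F_{n+1}$ is monic in~$\lambda$ of degree $\nu(n+1) \geq 1$, so $F_{n+1}(\lambda) \to +\infty$ as $\lambda \to \infty$. The intermediate value theorem then produces some real $\xi > \lambda_n$ with $F_{n+1}(\xi) = 0$, so the maximal root $\lambda_{n+1}$ of $F_{n+1}$ satisfies $\lambda_{n+1} \geq \xi > \lambda_n$.

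The only nontrivial ingredient is the strict inequality $\lambda_n > \tfrac{1}{2}f_1(n)$. For $n \in \{2,3\}$, we have $\nu(n) = 1$ and $F_n(\lambda) = \lambda - f_1(n)$, so $\lambda_n = f_1(n) > \tfrac{1}{2}f_1(n)$ directly. For $n \geq 4$, Lemma~\ref{lowerbound} gives $\lambda_n \geq m(n)$, and the explicit form
\[
  m(n) = \frac{f_1(n)}{2}\left(1 + \sqrt{1 - \frac{2}{3}\frac{(n-2)(n-3)(n+3)(n+4)}{n(n-1)(n+1)(n+2)}}\right)
\]
makes it clear that $m(n) > \tfrac{1}{2}f_1(n)$ as soon as the radicand is strictly positive. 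A short verification of the rational-function inequality
\[
  \frac{2}{3} \cdot \frac{(n-2)(n-3)(n+3)(n+4)}{n(n-1)(n+1)(n+2)} < 1 \qquad (n \geq 4)
\]
suffices; this can be done by clearing denominators and arguing on the resulting polynomial (or by an immediate cylindrical algebraic decomposition call analogous to the ones already used in Lemmas~\ref{comparison} and~\ref{boundlemma}).

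I anticipate no substantial obstacle in this argument. The only minor delicate point is that the strict lower bound $m(n) < \lambda_n$ in Lemma~\ref{lowerbound} is only claimed for $n \geq 6$, so the small cases $n \in \{2,3,4,5\}$ must be inspected separately; but $n = 2, 3$ are trivial by direct computation, and for $n = 4, 5$ the inequality $\lambda_n \geq m(n) > \tfrac{1}{2} f_1(n)$ is already enough, since we never need the strict separation between $m(n)$ and $\lambda_n$, only the fact that both exceed $\tfrac{1}{2}f_1(n)$. Everything then follows from the two lemmas already established.
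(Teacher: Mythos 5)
Your proof is correct and follows essentially the same route as the paper: use Lemma~\ref{lowerbound} to get $\lambda_n>\tfrac12 f_1(n)$, apply Lemma~\ref{comparison} to conclude $F_{n+1}(\lambda_n)<0$, and finish with the intermediate value theorem since $F_{n+1}$ is monic. Your extra care in verifying the strict inequality $\lambda_n>\tfrac12 f_1(n)$ for small~$n$ is a welcome refinement of a step the paper leaves implicit.
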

\begin{proof}
By Lemma~\ref{lowerbound} we have that
$\lambda_n>\frac12f_1(n)$.  Then by Lemma~\ref{comparison} we
have that $F_{n+1}(\lambda_n)<0$. Since by definition
$\lim_{x\to\infty}F_n(x)=+\infty$, it follows that between $\lambda_n$ and
$+\infty$ the function $F_{n+1}(x)$ takes the value~$0$ at some
point~$x_0$. In particular $\lambda_n<x_0\leq\lambda_{n+1}$.
\end{proof} 
\begin{cor}\label{maxcor}
The maximal root of $\det(B_{n}-\lambda A_{n})$ is equal to the maximal root of $F_{n}(\lambda)$. 
\end{cor}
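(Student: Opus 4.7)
The plan is to read the maximal root of $\det(B_n - \lambda A_n)$ directly from the explicit factorization established just after Theorem~\ref{thm.detA0A1}, namely
$$\det(B_n - \lambda A_n) = (-2)^n \, h^{(0)}_{\lfloor n/2 \rfloor} \, h^{(1)}_{\lceil n/2 \rceil} \, \lambda \, F_{n-1}(\lambda) \, F_n(\lambda).$$
The two $h$-factors are manifestly nonzero, being products of strictly positive terms according to~\eqref{eq.h}, so the zero set of $\det(B_n - \lambda A_n)$, viewed as a polynomial in~$\lambda$, is exactly $\{0\}$ together with the roots of $F_{n-1}(\lambda)$ and the roots of $F_n(\lambda)$. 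It therefore suffices to show that $\lambda_n$, which by definition belongs to this set, is at least as large as every other element of it.

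I would split this into two easy comparisons. First, $\lambda_n > 0$: Lemma~\ref{lowerbound} already gives $\lambda_n \geq m(n) \geq f_1(n)/2 = n(n-1)(n+1)(n+2)/16$, which is strictly positive for $n \geq 2$. Second, every root of $F_{n-1}$ lies strictly below $\lambda_n$: for $n \geq 3$, Lemma~\ref{mxroot} applied at index $n-1$ yields $\lambda_{n-1} < \lambda_n$, and any root of $F_{n-1}$ is by definition bounded above by $\lambda_{n-1}$. The boundary case $n = 2$ needs only a separate sentence, since $F_1 \equiv 1$ contributes no roots to the factorization at all.

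Combining these two comparisons pins down $\lambda_n$ as the strict maximum of the three-part root set described above, which is precisely the corollary's assertion. There is no real obstacle in this plan: all the substantial work has already been absorbed into Lemmas~\ref{lowerbound} and~\ref{mxroot}, and the only small subtlety is keeping track of the boundary case $n = 2$, where Lemma~\ref{mxroot} does not even need to be invoked.
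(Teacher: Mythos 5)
Your proposal is correct and follows essentially the same route as the paper: the corollary is stated there without a separate proof precisely because it is the immediate combination of the factorization $\det(B_n-\lambda A_n)=c_n\,\lambda\,F_n(\lambda)\,F_{n-1}(\lambda)$ with the positivity of $\lambda_n$ from Lemma~\ref{lowerbound} and the monotonicity $\lambda_{n-1}<\lambda_n$ from Lemma~\ref{mxroot}, exactly as you argue. Your explicit handling of the nonvanishing prefactors and of the boundary case $n=2$ (where $F_1\equiv 1$ has no roots) is a small but welcome addition of care, not a deviation.
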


\begin{lemma}\label{upperbound}
For $n\geq2$ the maximal root $\lambda_n$ satisfies
$\lambda_n\leq M(n)$ with
\[
  M(n) := \frac{f_1(n)}{3} + \biggl(f_1(n)\left(p_1(n)+\sqrt{p_2(n)}\right)\biggr)^{\!1/3}
   + \biggl(f_1(n)\left(p_1(n)-\sqrt{p_2(n)}\right)\biggr)^{\!1/3}
\]
where the polynomials $p_1$ and $p_2$ are given by
\begin{align*}
  p_1(n) &:= \frac{1}{4320}\bigl(n^8+4 n^7+8 n^6+10 n^5+404 n^4+796 n^3-4733 n^2-5130 n+16200\bigr),\\
  p_2(n) &:= \frac{1}{597196800} (n-3) (n-2) (n+3) (n+4) \bigl(7n^{12}+42n^{11}-641n^{10}-{}\\
  &\qquad 3590n^9-2951n^8+10198n^7-20619n^6-113090n^5+4705644n^4+{}\\
  &\qquad 9619080n^3-40140000n^2-44971200n+116640000\bigr).
\end{align*}
Moreover, we have $\lambda_n<M(n)$ if $n\geq8$.
\end{lemma}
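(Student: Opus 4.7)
The form of $M(n)$ is Cardano's formula for the largest real root of the cubic
\[
  T_n(\lambda) := \lambda^3 - f_1(n)\lambda^2 + f_2(n)\lambda - f_3(n).
\]
My plan is to first verify this identification, and then to decompose
\[
  F_n(\lambda) \;=\; \lambda^{\nu(n)-3}\,T_n(\lambda) \;+\; R_n(\lambda),
  \qquad
  R_n(\lambda) := \sum_{j=4}^{\nu(n)}(-1)^j f_j(n)\,\lambda^{\nu(n)-j},
\]
and show that both summands are nonnegative for every $\lambda\geq M(n)$. Since $F_n(\lambda)\to+\infty$ as $\lambda\to\infty$ and $\lambda_n$ is the largest root of $F_n$, this will force $\lambda_n\leq M(n)$.

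First I would verify the Cardano identification: the substitution $\lambda=\mu+f_1(n)/3$ reduces $T_n$ to the depressed cubic $\mu^3+p\mu+q$ with $p=f_2(n)-f_1(n)^2/3$ and $q=-2f_1(n)^3/27+f_1(n)f_2(n)/3-f_3(n)$. One then checks the polynomial identities $f_1(n)\,p_1(n)=-q/2$ and $f_1(n)^2\,p_2(n)=(q/2)^2+(p/3)^3$, after which the stated expression for $M(n)$ is exactly Cardano's formula for a real root of $T_n$. A sign analysis of $p_1$ and $p_2$ (which can again be settled by cylindrical algebraic decomposition) confirms that this Cardano expression picks out the \emph{largest} real root. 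This verification is pure polynomial manipulation, but it is the main algebraic burden of the proof since the polynomials $p_1$ and $p_2$ have degrees $8$ and $16$ in~$n$; in practice one delegates it to a CAS.

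For the second summand, I would reuse the pairing trick from the proof of Lemma~\ref{lowerbound}. For $\lambda>f_1(n)/2$, grouping the terms of $R_n(\lambda)$ two at a time starting at $j=4$ produces contributions of the form $\bigl(\lambda f_{2m}(n)-f_{2m+1}(n)\bigr)\lambda^{\nu(n)-2m-1}$, each strictly positive by Lemma~\ref{boundlemma}, together with a leftover term $f_{\nu(n)}(n)>0$ whenever $\nu(n)$ is even. Hence $R_n(\lambda)\geq 0$ whenever $\lambda>f_1(n)/2$, and this inequality is strict exactly when the sum contains at least one pair or leftover, i.e.\ when $\nu(n)\geq 4$, which is the case $n\geq 8$.

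Finally, I would verify $M(n)>f_1(n)/2$ by evaluating
\[
  T_n\!\left(\tfrac{f_1(n)}{2}\right) = -\tfrac18 f_1(n)^3 + \tfrac12 f_1(n)f_2(n) - f_3(n)
\]
and showing it is negative for $n\geq 6$ (a single polynomial inequality in $n$, again amenable to CAD), with the small cases $n\in\{2,3,4,5\}$, where $\nu(n)<3$ and $T_n$ is degenerate, handled by direct inspection of $F_n$. With $M(n)>f_1(n)/2$ secured, $T_n(\lambda)\geq 0$ for all $\lambda\geq M(n)$ (monic cubic, $M(n)$ its largest real root) combines with $R_n(\lambda)\geq 0$ on the same range to give $F_n(\lambda)\geq 0$ throughout $[M(n),\infty)$, and hence $\lambda_n\leq M(n)$. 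For $n\geq 8$ the strict positivity of $R_n(M(n))$ established above yields $F_n(M(n))>0$, so the bound is strict as claimed.
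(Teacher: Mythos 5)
Your proposal is correct and follows essentially the same route as the paper's proof: the same splitting of $F_n(\lambda)$ into $\lambda^{\nu(n)-3}$ times the cubic $\lambda^3-f_1\lambda^2+f_2\lambda-f_3$ plus the tail $\sum_{j\geq4}(-1)^jf_j\lambda^{\nu(n)-j}$, positivity of the tail for $\lambda>\tfrac12 f_1(n)$ via the pairing from Lemmas~\ref{boundlemma}/\ref{lowerbound}, identification of $M(n)$ as the largest root of the cubic through Cardano's formulas, and direct treatment of the small cases. The only differences are bookkeeping: you certify $M(n)>\tfrac12 f_1(n)$ by checking $T_n\bigl(\tfrac12 f_1(n)\bigr)<0$ instead of re-invoking the Lemma~\ref{lowerbound} argument, and you propose CAD/sign analysis where the paper verifies symbolically that $y_1-b/3=M(n)$ and checks numerically for $6\leq n\leq 9$ (where the discriminant is negative and the cube roots are complex) that this is indeed the largest real root.
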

\begin{proof}
The equality $\lambda_n=M(n)$ is easily established for $2\leq n\leq5$,
by using the symbolic simplification capabilities of Mathematica. Next,
for $n\geq 6$ we may write $F_n(\lambda)$ as
\begin{equation}\label{ineqq}
  F_n(\lambda) = \lambda^{\nu(n)-3}\bigl(\lambda^3-f_1(n)\lambda^2+f_2(n)\lambda-f_3(n)\bigr)
  +\sum_{j=4}^{\nu(n)} (-1)^j f_j(n) \lambda^{\nu(n)-j}.
\end{equation}
Obviously, the sum in~\eqref{ineqq} equals zero if $n\in\{6,7\}$ since then $\nu(n)=3$.
By a similar argument as in the proof of Lemma~\ref{lowerbound}, one
sees that this sum is strictly positive for $n\geq8$, provided that $\lambda>\frac12f_1(n)$.
Note that the maximal root of the polynomial 
\begin{equation}\label{deg3}
\lambda^{3}-f_1(n)\lambda^{2}+f_2(n)\lambda-f_3(n)
\end{equation}
is greater than $\frac12f_1(n)$ because the lower bound is the same as the one
derived in Lemma~\ref{lowerbound} using the same arguments as in its proof.
The roots of this third-degree polynomial can be computed by using
Cardano's formulas, namely we want to solve $x^3+bx^2+cx+d=0$. The roots of this
polynomial are given by $y_i-b/3$ for $i=1,2,3$ where
\[
  y_1 := \alpha+\beta \quad\text{and}\quad
  y_{2,3} := -\frac{\alpha+\beta}{2}\pm i\frac{\alpha-\beta}{2}\sqrt{3},
\]
where
\[
  \alpha := \left(-\frac{Q}{2}+\sqrt{\Delta}\right)^{\!1/3}
  \quad\text{and}\quad
  \beta := \left(-\frac{Q}{2}-\sqrt{\Delta}\right)^{\!1/3},
\]
where
\[
  Q := \frac{2b^3}{27}-\frac{bc}{3}+d
  \quad\text{and}\quad
  \Delta := \left(\frac19\bigl(3c-b^2\bigr)\right)^{\!3}+\left(\frac{Q}{2}\right)^{\!2}.
\]
Setting $\lambda=x$, $b=-f_1(n)$, $c=f_2(n)$ and $d=-f_3(n)$ we obtain
$\lambda_i=y_i-b/3$ as the roots of~\eqref{deg3}.

We obtain three real roots when $\Delta<0$ and when $\Delta>0$ we have only
one real root. The latter case happens for $n\geq 10$ and the real root is
$y_1-b/3$. For the cases $n=6,7,8,9$ we have three real roots and one can
check numerically that the maximal root is still $y_1-b/3$. By straight-forward calculations one can verify that
$y_1-b/3=M(n)$ and we get $F_n(M(n))=0$ for $n\in\{6,7\}$. For $n\geq8$ and
$\lambda\geq M(n)$ we have that~\eqref{deg3} is nonnegative which together with
the statements about the sum in~\eqref{ineqq} implies that $F_n(\lambda)>0$.
\end{proof}

Since we have proven that for $n\geq 6$ we have $m(n)<\lambda_{n}<M(n)$ it
follows (from dividing the inequality by $f_1(n)$ and taking the limit
$n\to\infty$) that
\[
  \underbrace{\vphantom{\bigg)}\frac{1}{2}+\frac{1}{2}\sqrt{\frac{1}{3}}}_{\textstyle\sim 0.789} \leq
  \lim_{n\to \infty}\frac{\lambda_n}{f_1(n)}\leq
  \underbrace{\frac{1}{3}+\biggl(\frac{2}{135}+\sqrt{\frac{7}{145800}}\biggr)^{\!1/3}+
    \biggl(\frac{2}{135}-\sqrt{\frac{7}{145800}}\biggr)^{\!1/3}}_{\textstyle\sim 0.811}.
\]
The previous lemmas indicate how to obtain a sequence of better and better
bounds for~$\lambda_n$: while in Lemma~\ref{lowerbound} the root of the
polynomial given by the first three terms of~$F_n(\lambda)$ yields a lower
bound, Lemma~\ref{upperbound} gives an upper bound by considering the first
four terms. A more accurate lower bound would follow from taking the first
five terms, then a better upper bound from the first six terms, etc.

\section{Asymptotic Behaviour of the Roots}
\label{section_asympt}

Since the matrices $M_n$ and $K_n$ defined in~\eqref{eq.M}--\eqref{eq.K} are
symmetric, it follows that the polynomials $F_n(\lambda)$ defined
in~\eqref{eq.f} have only real roots, all of which are positive because the
coefficients of $F_n(\lambda)$ are alternating. When we plot the roots for
different $n\in\mathbb{N}$ we get a very interesting picture, see
Figure~\ref{fig.roots}. Moreover, one sees that the smallest root of
$F_{2n}(\lambda)$ converges to a specific value as $n$ goes to infinity, and
the same is true for the smallest root of $F_{2n+1}(\lambda)$. The situation
is similar when considering the second-smallest root, the third-smallest root,
and so on. The following proposition makes this observation precise.

\begin{figure}[t]
\includegraphics[width=0.7\textwidth]{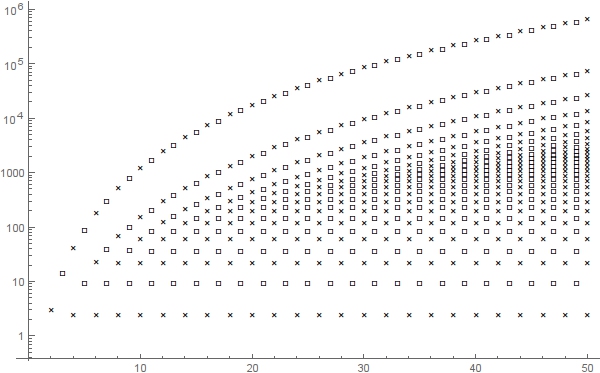}
\caption{Distribution of the roots of $F_n(\lambda)$ for $2\leq n\leq 50$ on a
logarithmic scale; for even~$n$ the locations of the roots are marked by crosses,
for odd~$n$ with squares.}
\label{fig.roots}
\end{figure}

\begin{prop}
Let $F_n(\lambda)$ be defined as in~\eqref{eq.f} and
let $\lambda^{(0)}_{n,1}<\dots<\lambda^{(0)}_{n,n}$ denote the roots of
$F_{2n}(\lambda)$ in increasing order, and similarly
$\lambda^{(1)}_{n,1}<\dots<\lambda^{(1)}_{n,n}$ denote the roots of
$F_{2n+1}(\lambda)$. Then for fixed $k\in\mathbb{N}$ we have
\[
  \lim_{n\to\infty} \lambda^{(0)}_{n,k} = \bigl(k-\tfrac12\bigr)^2\pi^2
  \qquad\text{and}\qquad
  \lim_{n\to\infty} \lambda^{(1)}_{n,k} = k^2\pi^2.
\]
\end{prop}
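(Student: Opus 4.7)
The plan is to prove the proposition by identifying explicit entire functions to which the polynomials $F_{2n}(\lambda)$ and $F_{2n+1}(\lambda)$ converge, after suitable renormalization, uniformly on compact subsets of $\mathbb{C}$, and then invoking Hurwitz's theorem. Specifically, I would show that appropriately rescaled versions of these polynomials converge to
\[
C(\lambda) := \sum_{j=0}^{\infty}\frac{(-1)^{j}\lambda^{j}}{(2j)!} = \cos\sqrt{\lambda},
\qquad
S(\lambda) := \sum_{j=0}^{\infty}\frac{(-1)^{j}\lambda^{j}}{(2j+1)!} = \frac{\sin\sqrt{\lambda}}{\sqrt{\lambda}},
\]
respectively. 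Both functions are entire in $\lambda$, and their positive zeros are the simple values $\bigl((k-\tfrac12)\pi\bigr)^{2}$ and $(k\pi)^{2}$ for $k\geq 1$, listed in increasing order; this matches the claim of the proposition exactly.

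First I would rewrite $(2\nu-2j+1)_{2\nu}$ in~\eqref{eq.f} as $(4\nu-2j)!/(2\nu-2j)!$ to obtain
\[
F_{2\nu}(\lambda)=\sum_{j=0}^{\nu}\frac{(-4)^{j-\nu}(4\nu-2j)!}{(2\nu-2j)!\,(2j)!}\,\lambda^{j}.
\]
With the normalizing scalar $K_{\nu}:=(-4)^{\nu}(2\nu)!/(4\nu)!$ (chosen so that $K_{\nu}F_{2\nu}(0)=1$), a short calculation shows that the coefficient of $\lambda^{j}$ in $K_{\nu}F_{2\nu}(\lambda)$ equals $\frac{(-1)^{j}}{(2j)!}r_{j}(\nu)$, where
\[
r_{j}(\nu):=4^{j}\prod_{i=0}^{2j-1}\frac{2\nu-i}{4\nu-i}.
\]
For fixed $j$, each ratio $(2\nu-i)/(4\nu-i)\to \tfrac12$ as $\nu\to\infty$, hence $r_{j}(\nu)\to 1$ and the coefficients of $K_{\nu}F_{2\nu}$ converge term-wise to those of $C(\lambda)$. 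An analogous rescaling $D_{\nu}:=(-4)^{\nu}(2\nu)!/(4\nu+1)!$, based on the identity $(2\nu-2j+1)_{2\nu+1}=(4\nu-2j+1)!/(2\nu-2j)!$, produces coefficient-wise convergence of $D_{\nu}F_{2\nu+1}(\lambda)$ to $S(\lambda)$.

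The main technical step will be to upgrade term-wise convergence to uniform convergence on compact sets. The key observation is the elementary inequality $(2\nu-i)/(4\nu-i)\leq \tfrac12$, valid for all $i,\nu\geq 0$, which forces $0\leq r_{j}(\nu)\leq 1$ for every $0\leq j\leq\nu$; consequently one has the uniform majorization
\[
|K_{\nu}F_{2\nu}(\lambda)|\leq \sum_{j=0}^{\nu}\frac{|\lambda|^{j}}{(2j)!}\leq \cosh\sqrt{|\lambda|}
\]
on all of $\mathbb{C}$, and analogously $|D_{\nu}F_{2\nu+1}(\lambda)|\leq \sinh\sqrt{|\lambda|}/\sqrt{|\lambda|}$. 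Combined with coefficient-wise convergence this yields uniform convergence on compact sets, either by dominated convergence applied termwise to the power-series representations, or equivalently by Vitali's theorem applied to the locally bounded family of entire functions. Hurwitz's theorem then applies to a small disk around each simple zero of $C$ (respectively $S$): for every sufficiently large $\nu$, the disk around $((k-\tfrac12)\pi)^{2}$ contains exactly one zero of $K_{\nu}F_{2\nu}$, which by the monotone ordering of positive zeros must be $\lambda^{(0)}_{\nu,k}$, and the same argument with $D_{\nu}F_{2\nu+1}$ and $S$ delivers $\lambda^{(1)}_{\nu,k}\to k^{2}\pi^{2}$. The only non-routine ingredient is recognizing the right normalization and the pointwise limit $\tfrac12$ of each factor; once that is in place, the uniform bound and Hurwitz complete the argument.
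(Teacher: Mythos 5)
Your proposal is correct and follows essentially the same route as the paper: normalize $F_{2n}$ and $F_{2n+1}$ so that the constant coefficient is $1$, observe that the coefficient of $\lambda^j$ tends to $(-1)^j/(2j)!$ resp.\ $(-1)^j/(2j+1)!$, and identify the limits $\cos\sqrt{\lambda}$ and $\sin\sqrt{\lambda}/\sqrt{\lambda}$, whose zeros give the claimed limits. The only difference is that you make explicit the passage from coefficientwise convergence to convergence of the roots (uniform bounds on compact sets plus Hurwitz's theorem), a step the paper treats as immediate; this is a welcome tightening rather than a different method.
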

\begin{proof}
The coefficient of $\lambda^j$ in $F_{2n}(\lambda)$ is, according
to~\eqref{eq.f}, given by
\[
  \frac{(-4)^{j-n} (2n-2j+1)_{2n}}{(2j)!}.
\]
We normalize the monic polynomials $F_{2n}$ such that their constant
coefficient is~$1$, i.e., we divide $F_{2n}$ by $(-4)^{-n} (2n+1)_{2n}$, and
obtain for the coefficient of $\lambda^j$ in these normalized polynomials:
\[
  \frac{(-4)^j (2n-2j+1)_{2n}}{(2j)! (2n+1)_{2n}}
  = \frac{(-1)^j}{(2j)!} \cdot \frac{4^j (2n-2j+1)_{2j}}{(4n-2j+1)_{2j}}.
\]
Obviously the second factor is, for fixed~$j$, a rational function in~$n$ with
numerator and denominator having the same degree~$2j$ and the same leading
coefficient~$16^j$; hence it tends to~$1$ as~$n$ goes to infinity. This means
that the power series obtained as the limit of the normalized polynomials is
\[
  \sum_{j=0}^\infty \frac{(-1)^j}{(2j)!}x^j = \cos(\sqrt{x})
\]
whose roots are precisely the limiting values in the assertion.  The limit of
$F_{2n+1}(\lambda)$ can be computed analogously and yields the Taylor
expansion of $\sin(\sqrt{x})/\sqrt{x}$.
\end{proof}

Recall that we are actually not interested in the smallest root of
$F_n(\lambda)$ but in the largest one. Its asymptotic behaviour can be
extracted in a similar fashion.

\begin{prop}
Let $F_n(\lambda)$ be defined as in~\eqref{eq.f} and let $\lambda_n$
denote the largest root of $F_n$, as before. Then
\[
  \lim_{n\to\infty}\frac{\lambda_n}{n^4} = \frac{1}{\pi^2}.
\]
\end{prop}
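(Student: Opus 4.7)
The plan is to mimic the strategy of the preceding proposition, applied to the reverse of $F_n$ rescaled by $n^4$, so that the largest root of $F_n$ gets pulled down to a finite positive limit. Since we expect $\lambda_n$ to grow like $n^4$, I define
\[
  \tilde{F}_n(x) := \frac{x^{\nu(n)}}{n^{4\nu(n)}}\,F_n\!\Bigl(\frac{n^4}{x}\Bigr)
  = \sum_{k=0}^{\nu(n)} (-1)^k\, \frac{f_k(n)}{n^{4k}}\, x^k,
\]
so that the smallest positive root of $\tilde F_n$ equals $n^4/\lambda_n$. Using the definition $f_k(n)=(n-2k+1)_{4k}/(4^k(2k)!)$ and writing out the rising factorial, we obtain, for each fixed~$k$,
\[
  \frac{f_k(n)}{n^{4k}} = \frac{1}{4^k(2k)!}\prod_{i=-2k+1}^{2k}\!\Bigl(1+\tfrac{i}{n}\Bigr)
  \;\xrightarrow[n\to\infty]{}\; \frac{1}{4^k(2k)!}.
\]

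The termwise limit of $\tilde F_n$ is therefore the entire function
\[
  C(x) := \sum_{k=0}^\infty \frac{(-1)^k}{4^k(2k)!}\,x^k = \cos\!\Bigl(\tfrac{\sqrt{x}}{2}\Bigr),
\]
whose smallest positive zero is $x=\pi^2$. To upgrade termwise convergence to convergence of zeros, I would verify that $\tilde F_n\to C$ uniformly on every compact subset of $\mathbb{C}$. The key estimate is that for $k\leq\nu(n)\leq n/2$ every factor $|1+i/n|$ is bounded by~$2$, giving the $n$-independent bound $f_k(n)/n^{4k}\leq 4^k/(2k)!$. The associated series $\sum_k (4R)^k/(2k)!$ converges for every $R>0$, so combining this dominating series with the termwise limit and with the tail estimate coming from the entirety of $C$ yields uniform convergence on any disk $|x|\leq R$.

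Now by Hurwitz's theorem, since $\pi^2$ is a simple zero of $C$, for all sufficiently large~$n$ the polynomial $\tilde F_n$ has exactly one zero in any prescribed small disk around $\pi^2$, and that zero tends to $\pi^2$. It only remains to confirm that this particular zero really is the smallest positive zero of $\tilde F_n$, i.e.\ that it corresponds to the largest root $\lambda_n$ of $F_n$. But the two-sided estimates from Lemmas~\ref{lowerbound} and~\ref{upperbound}, combined with $f_1(n)\sim n^4/8$, already force $n^4/\lambda_n$ to stay inside a bounded positive interval containing $\pi^2\approx 9.87$ and excluding both $0$ and the next zero $9\pi^2$ of~$C$; hence the smallest positive zero of $\tilde F_n$ is the one near $\pi^2$. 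This yields $n^4/\lambda_n\to\pi^2$, equivalently $\lambda_n/n^4\to 1/\pi^2$. The one genuine technical point is establishing the uniform convergence needed for Hurwitz's theorem, and the crude bound above handles it with room to spare.
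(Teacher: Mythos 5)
Your proposal is correct and follows essentially the same route as the paper: pass to the reciprocal polynomial rescaled by $n^4$, observe that the coefficients converge termwise to those of $\cos\bigl(\tfrac{\sqrt{x}}{2}\bigr)$, and identify the limit of $n^4/\lambda_n$ with its smallest zero $\pi^2$. The uniform convergence on compact sets, Hurwitz's theorem, and the localization of $n^4/\lambda_n$ via the bounds of Section~\ref{section_bounds} that you supply simply make rigorous the final step that the paper compresses into ``the claim follows.''
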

\begin{proof}
Let $\hat{F}_n(\lambda)$ denote the reciprocal polynomial of $F_n(\lambda)$,
which means that $\hat{F}_n(\lambda)=\lambda^{\nu(n)}F_n(1/\lambda)$ where
$\nu(n)=\lfloor n/2\rfloor$ is the degree of~$F_n$. Then the largest root of
$F_n$ equals the reciprocal of the smallest root of $\hat{F}_n$. Now consider
the family of polynomials
\[
  \hat{F}_n\left(\frac{\lambda}{n^4}\right) = \sum_{j=0}^{\nu(n)} \frac{(2j+1)_n}{(-4n^4)^j \, (n-2j)!} \lambda^j.
\]
The coefficient of $\lambda^j$ in these polynomials tends to $(-4)^{-j}/(2j)!$
as $n$ goes to infinity. Hence in the limit we obtain the power series
\[
  \sum_{j=0}^\infty \frac{x^j}{(-4)^j \, (2j)!} = \cos\left(\frac{\sqrt{x}}{2}\right),
\]
whose smallest root is $\pi^2$. The claim follows.
\end{proof}

Note that this result is in accordance with the bounds derived in
Section~\ref{section_bounds}, in particular with the inequality stated at the
end of that section: the numerical value of $8\pi^{-2}$ is approximately
$0.810569$ which is very close to the previously derived upper bound. The
reason why the upper bound is more accurate comes from the fact that a
third-degree approximation of $F_n(\lambda)$ was taken (in
Lemma~\ref{upperbound}), whereas the lower bound was obtained from a
second-degree polynomial (see Lemma~\ref{lowerbound}).

\section{The Boundary Estimate}\label{section_easydet}

Finally we tackle the second kind of problem, corresponding to Equation~\eqref{equ_referenceInverseInequality2}.
In this instance it is advantageous to formulate it using the Legendre basis. Thus we have to solve
the eigenvalue problem
\[
  L_n \vec{x}_n = \mu_n M_n \vec{x}_n
\]
with the following $n\times n$ matrices $L_n$ and $M_n$: the $(i,j)$ entry
of $L_n$ is given by
\[
  P_i(1)P_j(1) + P_i(-1)P_j(-1)
\]
whereas in $M_n$ one has
\[
  \int_{-1}^1 P_i(x)P_j(x)\,\mathrm{d}x.
\]
Here the basis functions are the venerable Legendre
polynomials~$P_n(x)$. Taking into account the well-known evaluations
$P_n(1)=1$ and $P_n(-1)=(-1)^n$ this is equivalent to finding the roots of the
determinant of $C_n=(c_{i,j})_{1\leq i,j\leq n}$ whose matrix entries are
given by
\[
  c_{i,j} := 1+(-1)^{i+j}-\delta_{i,j}\frac{2\mu}{2i+1}.
\]

Obviously the matrix $C_n$ has zeros at all positions $(i,j)$ for which $i+j$
is an odd integer. As in Section~\ref{section_detEvaluation} we decompose it
into block form and obtain
\[
  \det(C_n) =
  \begin{vmatrix}C^{(0)}_{\lfloor n/2\rfloor} & 0\\ 0& C^{(1)}_{\lceil n/2\rceil}\end{vmatrix} =
  \det\Bigl(C^{(0)}_{\lfloor n/2\rfloor}\Bigr) \cdot \det\Bigl(C^{(1)}_{\lceil n/2\rceil}\Bigr)
\]
where the subscripts indicate the dimension of the square matrices $C^{(0)}$ and $C^{(1)}$,
whose entries are independent of the dimension and given by
\[
  c^{(0)}_{i,j} := 2 - \delta_{i,j}\frac{2\mu}{4i+1}
  \quad\text{and}\quad
  c^{(1)}_{i,j} := 2 - \delta_{i,j}\frac{2\mu}{4i-1}.
\]
\begin{thm}
For all nonnegative integers~$n$ we have
\begin{align*}
\det\bigl(C^{(0)}_n\bigr) &= \frac{(-1)^n}{2^n\left(\frac54\right)_n}\mu^{n-1}\bigl(\mu-2n^2-3n\bigr),\\
\det\bigl(C^{(1)}_n\bigr) &= \frac{(-1)^n}{2^n\left(\frac34\right)_n}\mu^{n-1}\bigl(\mu-2n^2-n\bigr).
\end{align*}
\end{thm}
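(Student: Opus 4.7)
The plan is to observe that both matrices $C^{(0)}_n$ and $C^{(1)}_n$ are rank-one perturbations of diagonal matrices, so that the matrix determinant lemma applies cleanly. Writing $\mathbf{1}=(1,\dots,1)^T\in\mathbb{R}^n$ and
\[
  D^{(0)}:=\operatorname{diag}\!\Bigl(\tfrac{1}{5},\tfrac{1}{9},\dots,\tfrac{1}{4n+1}\Bigr),\qquad
  D^{(1)}:=\operatorname{diag}\!\Bigl(\tfrac{1}{3},\tfrac{1}{7},\dots,\tfrac{1}{4n-1}\Bigr),
\]
one has
\[
  C^{(\ell)}_n = -2\mu D^{(\ell)} + 2\,\mathbf{1}\mathbf{1}^T,
\]
which puts the matrix into the form $A + uv^T$ with $A=-2\mu D^{(\ell)}$, $u=2\mathbf{1}$, $v=\mathbf{1}$.

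Assuming (temporarily) that $\mu\neq 0$, the matrix determinant lemma gives
\[
  \det\bigl(C^{(\ell)}_n\bigr) = \det\bigl(-2\mu D^{(\ell)}\bigr)\cdot\Bigl(1 + 2\,\mathbf{1}^T\bigl(-2\mu D^{(\ell)}\bigr)^{-1}\mathbf{1}\Bigr).
\]
The first factor is a product over the diagonal: for $\ell=0$ it equals
\[
  \frac{(-2\mu)^n}{\prod_{i=1}^n(4i+1)} = \frac{(-1)^n\mu^n}{2^n\,\left(\frac54\right)_n},
\]
using $\prod_{i=1}^n(4i+1) = 4^n\left(\frac54\right)_n$, and analogously for $\ell=1$ one gets $\frac{(-1)^n\mu^n}{2^n\left(\frac34\right)_n}$. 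The second factor simplifies because $\mathbf{1}^T(-2\mu D^{(\ell)})^{-1}\mathbf{1}$ is just a sum of diagonal reciprocals; for $\ell=0$,
\[
  1 + 2\,\mathbf{1}^T\bigl(-2\mu D^{(0)}\bigr)^{-1}\mathbf{1} = 1 - \frac{1}{\mu}\sum_{i=1}^n(4i+1) = 1 - \frac{2n^2+3n}{\mu} = \frac{\mu-2n^2-3n}{\mu},
\]
while for $\ell=1$ the sum $\sum_{i=1}^n(4i-1)=2n^2+n$ gives the factor $(\mu-2n^2-n)/\mu$.

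Multiplying the two factors produces exactly the stated closed forms. Since both sides of each identity are polynomials in~$\mu$ (after clearing one factor of~$\mu$ against $\mu^{n-1}$), the identity that was proven on the Zariski-open set $\mu\neq 0$ extends to all $\mu\in\mathbb{R}$. The $n=0$ case is the $0\times 0$ determinant, equal to~$1$ by convention, matching both formulas interpreted with the usual conventions $(a)_0=1$. No real obstacles are expected; the only bookkeeping is recognizing the rising-factorial form of $\prod(4i\pm 1)$ and evaluating the two arithmetic sums $\sum_{i=1}^n(4i\pm 1)$.
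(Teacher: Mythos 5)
Your proof is correct, and the arithmetic checks out: $\prod_{i=1}^n(4i+1)=4^n\left(\tfrac54\right)_n$, $\prod_{i=1}^n(4i-1)=4^n\left(\tfrac34\right)_n$, $\sum_{i=1}^n(4i+1)=2n^2+3n$, and $\sum_{i=1}^n(4i-1)=2n^2+n$, so the two factors produced by the matrix determinant lemma combine to exactly the stated closed forms, and your handling of the $\mu\neq0$ restriction (via polynomial identity) and of the $n=0$ convention is sound. The route differs from the paper's only in packaging: the paper exploits the very same rank-one-plus-diagonal structure, but carries out explicit elementary row operations (subtract the first row from rows $2$ through $n$, then add $\tfrac{4i+1}{\mu}$ times row $i$ to the first row) to bring $C^{(\ell)}_n$ to triangular form, whereas you invoke the matrix determinant lemma $\det(A+uv^T)=\det(A)\,\bigl(1+v^TA^{-1}u\bigr)$ with $A=-2\mu D^{(\ell)}$ and $u$, $v$ proportional to the all-ones vector. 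Your version is slightly slicker and makes explicit the division by $\mu$ that the paper's row operations also perform implicitly, then removes it cleanly by the Zariski-density remark; the paper's hands-on elimination is self-contained and needs no auxiliary identity. In substance the two computations are the same.
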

\begin{proof}
By some elementary row operations, the matrix $C^{(0)}_n$ is brought to
triangular form. First we subtract the first row from rows $2$ through~$n$,
obtaining the following matrix: the $(1,1)$ entry is $2-\frac{2\mu}{5}$,
the remaining entries in the first row are~$2$, the remaining entries of the
first column are $\frac{2\mu}{5}$, and the diagonal entries $(i,i)$ are
$-\frac{2\mu}{4i+1}$ for $i>1$; the rest are zeros. So in order to
transform the matrix to lower triangular form, we multiply row~$i$, for $2\leq
i\leq n$, by $\frac{4i+1}{\mu}$ and add it to the first row. Thus the
$(1,1)$-entry becomes
\[
  2-\frac{2\mu}{5} + \sum_{i=2}^n \frac{2\mu}{5}\frac{4i+1}{\mu} = \frac25\bigl(2n^2+3n-\mu\bigr).
\]
It follows that the determinant of $C^{(0)}_n$ is
\[
  \frac25\bigl(2n^2+3n-\mu\bigr)\prod_{i=2}^n \frac{-2\mu}{4i+1} =
  \frac{(-1)^n}{2^n\left(\frac54\right)_n}\mu^{n-1}\bigl(\mu-2n^2-3n\bigr),
\]
as claimed. The evaluation of $\det\bigl(C^{(1)}_n\bigr)$ is obtained in a
completely analogous way.
\end{proof}
\begin{cor}
For all nonnegative integers~$n$ we have
\begin{align*}
  \det(C_n) &= \det_{1\leq i,j\leq n} \Bigl(1+(-1)^{i+j}-\delta_{i,j}\frac{2\mu}{2i+1}\Bigr) \\
  &= \frac{(-1)^n}{\left(\frac32\right)_n}\mu^{n-2}
  \Bigl(\mu-2\left\lfloor\frac{n}{2}\right\rfloor^2-3\left\lfloor\frac{n}{2}\right\rfloor\Bigr)
  \Bigl(\mu-2\left\lceil\frac{n}{2}\right\rceil^2-\left\lceil\frac{n}{2}\right\rceil\Bigr) \\
  &= \frac{(-1)^n}{\left(\frac32\right)_n}\mu^{n-2}
  \begin{cases}
    \bigl(\mu-\frac{n^2+3n}{2}\bigr)\bigl(\mu-\frac{n^2+n}{2}\bigr), & \text{if }n\text{ is even},\\
    \bigl(\mu-\frac{n^2+3n+2}{2}\bigr)\bigl(\mu-\frac{n^2+n-2}{2}\bigr), & \text{if }n\text{ is odd}.
  \end{cases}
\end{align*}
\end{cor}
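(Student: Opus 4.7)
The plan is to combine the block decomposition $\det(C_n) = \det\bigl(C^{(0)}_{\lfloor n/2\rfloor}\bigr) \cdot \det\bigl(C^{(1)}_{\lceil n/2\rceil}\bigr)$ derived just before the theorem with the two closed forms supplied by the theorem itself. Plugging these in, the sign accumulates as $(-1)^{\lfloor n/2\rfloor+\lceil n/2\rceil} = (-1)^n$, the powers of $\mu$ combine to $(\lfloor n/2\rfloor-1)+(\lceil n/2\rceil-1) = n-2$, and the remaining scalar prefactor collapses to
\[
  \frac{1}{2^{\lfloor n/2\rfloor+\lceil n/2\rceil}\,\bigl(\tfrac54\bigr)_{\lfloor n/2\rfloor}\bigl(\tfrac34\bigr)_{\lceil n/2\rceil}} = \frac{1}{2^n \bigl(\tfrac54\bigr)_{\lfloor n/2\rfloor}\bigl(\tfrac34\bigr)_{\lceil n/2\rceil}}.
\]

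The one non-trivial step is the Pochhammer identity
\[
  2^n \bigl(\tfrac54\bigr)_{\lfloor n/2\rfloor}\bigl(\tfrac34\bigr)_{\lceil n/2\rceil} = \bigl(\tfrac32\bigr)_n.
\]
I would verify this by writing each side as an explicit product of odd integers: $(\tfrac32)_n = 2^{-n} \prod_{k=1}^{n}(2k+1)$, while $\bigl(\tfrac54\bigr)_m = 4^{-m}\prod_{i=1}^m (4i+1)$ and $\bigl(\tfrac34\bigr)_m = 4^{-m}\prod_{i=1}^m (4i-1)$; the factors $4i+1$ and $4i-1$ interleave to fill in all the odd integers from $3$ to $2n+1$ exactly once. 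A short case split on the parity of $n$ makes the indexing match up.

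Finally, to obtain the second and third displayed lines of the corollary, I would rewrite the two linear factors in $\mu$ from the theorem in terms of $\lfloor n/2\rfloor$ and $\lceil n/2\rceil$, which gives the second line verbatim. The piecewise last line is then a direct expansion: for $n=2m$ both floor and ceiling equal $n/2$, producing $\mu - (n^2+3n)/2$ and $\mu - (n^2+n)/2$; for $n=2m+1$ the substitutions $\lfloor n/2\rfloor = (n-1)/2$ and $\lceil n/2\rceil = (n+1)/2$ give $\mu - (n^2+n-2)/2$ and $\mu - (n^2+3n+2)/2$, respectively.

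The whole argument is essentially bookkeeping once the theorem is in hand; the only step requiring any thought is the Pochhammer identity above, and that is a standard duplication-type manipulation. There is no substantial obstacle.
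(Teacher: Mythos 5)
Your proposal is correct and follows the same route the paper intends: the corollary is an immediate consequence of multiplying the two determinant evaluations of the theorem over the block decomposition $\det(C_n)=\det\bigl(C^{(0)}_{\lfloor n/2\rfloor}\bigr)\det\bigl(C^{(1)}_{\lceil n/2\rceil}\bigr)$, with the Pochhammer identity $2^n\bigl(\tfrac54\bigr)_{\lfloor n/2\rfloor}\bigl(\tfrac34\bigr)_{\lceil n/2\rceil}=\bigl(\tfrac32\bigr)_n$ and the parity bookkeeping exactly as you describe. Your interleaving-of-odd-integers verification of that identity is sound, so there is nothing to add.
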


The previous corollary now gives an answer to the original eigenvalue problem, namely
that the largest eigenvalue $\mu_n$ of $L_n \vec{x}_n = \mu_n M_n \vec{x}_n$ is
\[
  \mu_n = \begin{cases}
  \frac12n(n+3) & \text{if }n\text{ is even},\\
  \frac12n(n+3)+1 & \text{if }n\text{ is odd}.
  \end{cases}
\]

\section{Outlook and future work}
In this work we have presented tools from symbolic computation to give precise
estimates for two types of inverse inequalities. It would be interesting to
apply these methods on other types of elements, like simplices for
example. Moreover, in Isogeometric Analysis (IgA) the constants in the inverse
inequalities depend on three parameters, i.e. the mesh size, the polynomial
degree and the smoothness factor. For this case not so much is known and it
would be attractive to apply tools from symbolic computation also in this
case.
  
\subsection*{Acknowledgment}
We are grateful to Ulrich Langer for initiating this collaboration between the
Numerical Analysis Group and the Symbolic Computation Groups at JKU, RISC, and
RICAM. We want to thank Peter Paule for insightful discussions on the
properties of the polynomials~$F_n(\lambda)$. We also thank both of them for
constantly motivating us to put a serious effort into solving this important
problem. Last but not least we appreciate the careful reading and the helpful
comments of the anonymous referee.

%\bibliography{maxeig}

\begin{thebibliography}{10}

\bibitem{BecirovicEtAl06}
A. Be\'{c}irovi\'{c}, P. Paule, V. Pillwein, A. Riese, C. Schneider, J. Sch\"oberl,
\newblock Hypergeometric summation algorithms for high order finite elements,
\newblock Computing 78 (2006) 235--249.

\bibitem{Brenner_2008}
S.~C. Brenner, L.R. Scott,
\newblock The mathematical theory of finite element methods, Texts in Applied Mathematics, vol.~15,
\newblock Springer, New York, third edition, 2008.

\bibitem{Cauchy41}
A.-L. Cauchy,
\newblock M\'{e}moire sur les fonctions altern\'{e}es et sur les sommes
  altern\'{e}es,
\newblock in: Exercices d'Analyse et de Physique Math\'{e}matique,
  vol.~2, Bachelier, 1841, pp. 151--159.

\bibitem{Ciarlet_1978}
P.~G. Ciarlet,
\newblock The finite element method for elliptic problems,
\newblock Studies in Mathematics and its Applications, vol.~4,
\newblock North-Holland Publishing Co., Amsterdam-New York-Oxford, 1978.

\bibitem{Collins75}
G.~E. Collins,
\newblock Quantifier elimination for the elementary theory of real closed
  fields by cylindrical algebraic decomposition,
\newblock Lecture Notes in Computer Science 33 (1975) 134--183.

\bibitem{Pietro_2012}
D.~A. Di~Pietro, A. Ern,
\newblock Mathematical aspects of discontinuous {G}alerkin methods,
  Math\'e\-matiques \& Applications (Berlin) [Mathematics \&
  Applications], vol.~69,
\newblock Springer, Heidelberg, 2012.

\bibitem{Georgoulis_2008}
E.~H. Georgoulis,
\newblock Inverse-type estimates on {$hp$}-finite element spaces and
  applications,
\newblock Math. Comp. 77 (2008) 201--219.

\bibitem{GerholdKauers05}
S. Gerhold, M. Kauers,
\newblock A procedure for proving special function inequalities involving a
  discrete parameter,
\newblock in: Proceedings of the International Symposium on Symbolic and
  Algebraic Computation (ISSAC), ACM, New York, USA, 2005, pp. 156--162.

\bibitem{Graham_2005}
I.~G. Graham, W.~Hackbusch, S.~A. Sauter,
\newblock Finite elements on degenerate meshes: inverse-type inequalities and
  applications,
\newblock IMA J. Numer. Anal. 25 (2005) 379--407.

\bibitem{Hackbusch_1985}
W. Hackbusch,
\newblock Multigrid methods and applications, Springer
  Series in Computational Mathematics, vol.~4,
\newblock Springer-Verlag, Berlin, 1985.

\bibitem{Koutschan09}
C. Koutschan,
\newblock Advanced applications of the holonomic systems approach,
\newblock PhD thesis, Research Institute for Symbolic Computation (RISC),
  Johannes Kep\-ler University, Linz, Austria, 2009.

\bibitem{Koutschan10b}
C. Koutschan,
\newblock {HolonomicFunctions} (user's guide),
\newblock Technical Report 10-01, RISC Report Series, Johannes Kep\-ler
  University, Linz, Austria, 2010.
\newblock
  http:/$\!$/www.risc.jku.at/{\linebreak[0]}research/{\linebreak[0]}combinat/{\linebreak[0]}software/{\linebreak[0]}HolonomicFunctions/.

\bibitem{elec}
C. Koutschan,
\newblock Electronic supplementary material for the paper ``{I}nverse
  inequality estimates with symbolic computation'', 2016.
\newblock Available at
  http:/$\!$/www.koutschan.de/\linebreak[0]data/\linebreak[0]maxeig/.

\bibitem{KoutschanKauersZeilberger11}
C. Koutschan, M. Kauers, D. Zeilberger,
\newblock Proof of {G}eorge {A}ndrews's and {D}avid {R}obbins's $q$-{TSPP}
  conjecture,
\newblock Proceedings of the National Academy of Sciences
  108 (2011) 2196--2199.

\bibitem{KoutschanLehrenfeldSchoeberl12}
C. Koutschan, C. Lehrenfeld, J. Sch\"oberl,
\newblock Computer algebra meets finite elements: an efficient implementation
  for {M}axwell's equations,
\newblock in: U. Langer, P. Paule, (Eds.), Numerical and
  Symbolic Scientific Computing: Progress and Prospects, Texts \& Monographs
  in Symbolic Computation, Springer, Wien, 2012, pp. 105--121.

\bibitem{KoutschanThanatipanonda13}
C. Koutschan, T. Thanatipanonda,
\newblock Advanced computer algebra for determinants,
\newblock Annals of Combinatorics 17 (2013) 509--523.

\bibitem{Krattenthaler05}
C. Krattenthaler,
\newblock Advanced determinant calculus: A complement,
\newblock Linear Algebra and its Applications 411 (2005) 68--166.

\bibitem{Kuperberg02}
G. Kuperberg,
\newblock Symmetry classes of alternating-sign matrices under one roof,
\newblock Annals of Mathematics 156 (2002) 835--866.

\bibitem{Ozisik_2010}
S.~\"{O}z{\i}\c{s}{\i}k, B.~Rivi{\`e}re, T.~Warburton,
\newblock {O}n the constants in inverse inequalities in {$L_2$},
\newblock Technical Report TR10-19, {C}omputational and {A}pplied {M}athematics
  {D}epartment, {R}ice {U}niversity, 2010.

\bibitem{Pillwein15}
V. Pillwein,
\newblock Symbolic computation and finite element methods,
\newblock in: V.~P. Gerdt, W.~Koepf, W.~M. Seiler, E.~V. Vorozhtsov, (Eds.),
  {CASC} 2015, Lecture Notes in Computer Science (LNCS), vol.~9301,
  Springer-Verlag Berlin Heidelberg, 2015, pp. 374--388.

\bibitem{PillweinTakacs14}
V. Pillwein S. Takacs,
\newblock A local {F}ourier convergence analysis of a multigrid method using
  symbolic computation,
\newblock Journal of Symbolic Computation 63 (2014) 1--20.

\bibitem{Schwab_1998}
C. Schwab,
\newblock {$p$}- and {$hp$}-finite element methods, theory and applications in solid and fluid mechanics,
\newblock Numerical Mathematics and Scientific Computation, The Clarendon
  Press, Oxford University Press, New York, 1998.

\bibitem{Trottenberg_2001}
U.~Trottenberg, C.~W. Oosterlee, A.~Sch{\"u}ller,
\newblock Multigrid,
\newblock Academic Press, Inc., San Diego, CA, 2001.

\bibitem{Verfuerth_2004}
R.~Verf\"{u}rth,
\newblock {O}n the constants in some inverse inequalities for finite element
  functions,
\newblock Technical report, {F}akult\"{a}t f\"{u}r {M}athematik,
  {R}uhr-{U}niversit\"at {B}ochum, 2004.

\bibitem{Warburton_2003}
T.~Warburton, J.~S. Hesthaven,
\newblock On the constants in {$hp$}-finite element trace inverse inequalities,
\newblock Comput. Methods Appl. Mech. Engrg. 192 (2003) 2765--2773.

\bibitem{Zeilberger90}
D. Zeilberger,
\newblock A holonomic systems approach to special functions identities,
\newblock Journal of Computational and Applied Mathematics
  32 (1990) 321--368.

\bibitem{Zeilberger07}
D. Zeilberger,
\newblock The holonomic ansatz {II}. {A}utomatic discovery(!) and proof(!!) of
  holonomic determinant evaluations,
\newblock Annals of Combinatorics 11 (2007) 241--247.

\end{thebibliography}
\bibliographystyle{plain}

\end{document}